\documentclass[11pt]{article}
\usepackage[
  a4paper,
  top=2.2cm, bottom=2.2cm,
  left=2.5cm, right=2.5cm
]{geometry}
\usepackage{times}
\usepackage{soul}
\usepackage{url}
\usepackage[hidelinks]{hyperref}
\usepackage[utf8]{inputenc}
\usepackage[small]{caption}
\usepackage{graphicx}
\usepackage{amsmath}
\usepackage{amsthm}
\usepackage{booktabs}
\usepackage[numbers]{natbib}
\usepackage{algorithm}
\usepackage[switch]{lineno}

\usepackage[noend]{algpseudocode}
\usepackage{enumitem}
\usepackage{amsfonts}
\usepackage{mathtools}
\usepackage{complexity}
\usepackage{color}
\usepackage{mleftright}
\usepackage{multicol,multirow}

\urlstyle{same}

\newtheorem{theorem}{Theorem}
\newtheorem{lemma}{Lemma}

\newtheorem{definition}{Definition}

\algnewcommand{\IfSingleLine}[2]{\State \algorithmicif\ #1\ \algorithmicthen #2}
\algnewcommand{\ElseIfSingleLine}[2]{\State \algorithmicelse\ \algorithmicif\ #1\ \algorithmicthen #2}
\algnewcommand{\algorithmicand}{\textbf{ and }}
\algnewcommand{\algorithmicor}{\textbf{ or }}
\algnewcommand{\OR}{\algorithmicor}
\algnewcommand{\AND}{\algorithmicand}


\newcommand{\ie}{\textit{i.e.}}
\newcommand{\bigO}{\mathcal{O}}
\newcommand{\bigOs}{\bigO^*}
\newcommand{\card}[1]{\left|#1\right|}
\newcommand{\nl}{\bar}
\newcommand{\formula}{F}
\newcommand{\literal}{\ell}
\newcommand{\model}{\sigma}
\newcommand{\wfunc}{w}

\newcommand{\solW}{W}
\newcommand{\ins}{\mathcal{I}}
\newcommand{\shortins}{(\formula, \wfunc)}
\newcommand{\longins}{(\formula, \wfunc, \solW)}
\newcommand{\reduced}[1]{R(#1)}

\newcommand{\algtwocnf}{\texttt{Alg2CNF}}
\newcommand{\algthreecnf}{\texttt{Alg3CNF}}
\newcommand{\algpripw}{\texttt{AlgPrimalPw}}
\newcommand{\algdualpw}{\texttt{AlgDualPw}}
\newcommand{\algbf}{\texttt{Brute}}

\DeclareMathOperator{\pw}{pw}
\DeclareMathOperator{\var}{var}
\DeclareMathOperator{\ltr}{ltr}

\DeclareMathOperator{\WMC}{WMC}

\newtheorem{rrule}{R-Rule}

\title{New Algorithms for \#2-SAT and \#3-SAT}

\author{
Junqiang Peng$^1$,
Zimo Sheng$^2$,
Mingyu Xiao$^{1}$\footnote{Corresponding Author},
\\
{\footnotesize
    \textsuperscript{\rm 1}
University of Electronic Science and Technology of China, Chengdu, China
}
		\\
{\footnotesize
    \textsuperscript{\rm 2} Anhui University, China
}
		\\
{\footnotesize
		jqpeng0@foxmail.com, shengzimo2016@gmail.com, myxiao@uestc.edu.cn
}
}
\date{}

\begin{document}

\maketitle

\begin{abstract}
    The \#2-SAT and \#3-SAT problems involve counting the number of satisfying assignments (also called models) for instances of 2-SAT and 3-SAT, respectively. In 2010, Zhou et al. proposed an $\bigOs(1.1892^m)$-time algorithm for \#2-SAT and an efficient approach for \#3-SAT, where $m$ denotes the number of clauses. In this paper, we show that the weighted versions of \#2-SAT and \#3-SAT can be solved in $\bigOs(1.1082^m)$ and $\bigOs(1.4423^m)$ time, respectively.
    These results directly apply to the unweighted cases and achieve substantial improvements over the previous results. These advancements are enabled by the introduction of novel reduction rules, a refined analysis of branching operations, and the application of path decompositions on the primal and dual graphs of the formula.
\end{abstract}

\section{Introduction}
The \textsc{Boolean Satisfiability} problem (SAT), the first problem to be proven $\NP$-complete~\citep{conf/stoc/Cook71}, is a cornerstone of computational complexity theory.
 Its counting variant, the \textsc{Propositional Model Counting} problem (\#SAT), introduced and shown to be $\#\P$-complete by \citet{journals/tcs/Valiant79}, holds comparable significance.

Given a CNF formula, SAT seeks to determine whether the formula is satisfiable, while \#SAT aims to count the number of satisfying assignments (also known as models). Both SAT and \#SAT, along with their variants, are among the most influential problems in computational theory due to their broad applications in computer science, artificial intelligence, and numerous other domains, both theoretical and practical. As such, these problems have garnered substantial attention and have been extensively studied across various fields, including computational complexity and algorithm design. For comprehensive surveys, For comprehensive surveys, we refer to~\citep{series/faia/HandbookSAT,journals/cacm/FichteBHS23}.

This paper focuses on \#SAT and its weighted extension, \textsc{Weighted Model Counting} (WMC, or weighted \#SAT). In WMC, each literal (a variable or its negation) in the formula is assigned a weight. The task is to compute the sum of the weights of all satisfying assignments, where the weight of an assignment is the product of the weights of the true literals in the assignment. Notably, WMC reduces to \#SAT when all literals have identical weights. Efficient algorithms for (weighted) \#SAT have a profound impact on various application areas~\citep{series/faia/GomesSS21}, such as probabilistic inference~\citep{journals/ai/Roth96,conf/focs/BacchusDP03,conf/aaai/SangBK05,journals/ai/ChaviraD08}, network reliability estimation~\citep{conf/aaai/Duenas-OsorioMP17}, and explainable AI~\citep{conf/sat/NarodytskaSMIM19}. This significance is highlighted by the annual Model Counting Competition\footnote{\url{https://mccompetition.org/}}, which bridges theoretical advancements and practical implementations in model counting.

A fundamental question is: how fast (weighted) \#SAT can be solved in the worst case?  
The naïve algorithm, which enumerates all assignments, runs in $\bigOs(2^n)$ time\footnote{The $\bigOs$ notation suppresses polynomial factors in the input size.}, where \( n \) is the number of variables in the formula. Under the Strong Exponential Time Hypothesis (SETH)~\citep{journals/jcss/ImpagliazzoP01}, SAT (and thus \#SAT) cannot be solved in $\bigOs((2-\epsilon)^n)$ time for any constant \(\epsilon > 0\). Another key parameter, the number of clauses \( m \) in the formula, has also been extensively studied. It is well-known that \#SAT can be solved in $\bigOs(2^m)$ time using the Inclusion-Exclusion principle~\citep{journals/siamcomp/Iwama89,journals/ipl/Lozinskii92}, which also applies to the weighted variant. However, no algorithm with a runtime of $\bigOs(c^m)$ for \( c < 2 \) was discovered. In \citeyear{journals/talg/CyganDLMNOPSW16}, \citet{journals/talg/CyganDLMNOPSW16} proved that such an algorithm does not exist unless SETH fails.

The barriers $2^n$ and $2^m$ can be broken for restricted versions of (weighted) \#SAT.
One notable example is the (weighted) \#$k$-SAT problem, where each clause contains at most \( k \) literals in the input formula. There is a rich history of developing faster algorithms for \#2-SAT and \#3-SAT~\citep{journals/tcs/Dubois91,journals/tcs/Zhang96a,journals/tcs/DahllofJW05,journals/ipl/Kutzkov07,conf/iwpec/Wahlstrom08,conf/aaai/ZhouYZ10}. The current fastest algorithms for weighted \#2-SAT and \#3-SAT achieve runtimes of $\bigOs(1.2377^n)$~\citep{conf/iwpec/Wahlstrom08} and $\bigOs(1.6423^n)$~\citep{journals/ipl/Kutzkov07}, respectively. For the parameter \( m \), \citeauthor{conf/aaai/ZhouYZ10} introduced an $\bigOs(1.1740^m)$-time algorithm for \#2-SAT and suggested a simple approach for \#3-SAT.
However, the analysis of the \#3-SAT algorithm in \citep{conf/aaai/ZhouYZ10} does not yield a valid runtime bound (more details can be found in  Appendix~\ref{appendix:note}).

\paragraph{Our Contribution.}
We propose two novel algorithms, $\algtwocnf$ and $\algthreecnf$, for weighted \#2-SAT and weighted \#3-SAT, achieving runtime bounds of $\bigOs(1.1082^m)$ and $\bigOs(1.4423^m)$, respectively. 
The algorithms and complexity bounds directly apply to the unweighted case, significantly improving upon previous results.
To this end, we bring new techniques for (weighted) \#2-SAT and \#3-SAT.

\paragraph{Our Approach.}
Most existing algorithms, including ours, are classical branch-and-search algorithms (also called DPLL-style algorithms) that first apply reduction (preprocessing) rules and then recursively solve the problem via branching (e.g., selecting a variable and assigning it values).
Typically, variables are processed in descending order of their degree, where the \emph{degree} of a variable is the number of its occurrences in the formula.
However, such algorithms often perform poorly when encountering low-degree variables.
To address this, previous work has employed tailored branching strategies with intricate analyses to mitigate this bottleneck.

Our approach departs from these complexities. Instead of elaborate branching, we apply path decompositions on the primal and dual graphs of the formula to efficiently handle low-degree cases. The primal and dual graphs represent structural relationships between variables and clauses, and path decompositions allow us to transform these structures into a path-like form. Although our algorithms rely on simple branching, we demonstrate through sophisticated analyses that this approach, combined with our reduction rules, achieves substantial efficiency. These ideas may hold potential for broader applications in the future.

\paragraph{Other Related Works.} 
There is extensive research on fast algorithms for SAT and its related problems parameterized by $m$.
We list the current best results for some of these problems, achieved after a series of improvements.
\citet{conf/aaai/ChuXZ21} showed that SAT can be solved in $\bigOs(1.2226^m)$ time. 
\citet{journals/jal/BeigelE05} introduced an $\bigOs(1.3645^m)$-time algorithm for 3-SAT.
We also mention the \textsc{Maximum Satisfiability} problem (MaxSAT), an optimization version of SAT, where the objective is to satisfy the maximum number of clauses in a given formula.
Currently, MaxSAT and Max-2-SAT can be solved in time $\bigOs(1.2886^m)$~\citep{conf/ijcai/Xiao22} and $\bigOs(1.1159^m)$~\citep{conf/soda/GaspersS09}, respectively.

\section{Preliminaries}
\subsection{Notations}
A \emph{Boolean variable} (or simply \emph{variable}) can be assigned value $1$ (\textsc{true}) or $0$ (\textsc{false}).
A variable $x$ has two corresponding \emph{literals}: the positive literal $x$ and the negative literal $\nl{x}$.
We use $\nl{x}$ to denote the negation of literal $x$, and thus $\nl{\nl{x}}=x$.
Let $V$ be a set of variables.
A \emph{clause} on $V$ is a set of literals on $V$. Note that a clause might be empty.
A \emph{CNF formula} (or simply \emph{formula}) over $V$ is a set of clauses on $V$.
We denote by $\var(\formula)$ the variable set of $\formula$.
For a literal $\literal$, $\var(\literal)$ denotes its corresponding variable.
For a clause $C$, $\var(C)$ denotes the set of variables such that either $x\in C$ or $\nl{x}\in C$.
We denote by $n(\formula)$ and $m(\formula)$ the number of variables and clauses in formula $\formula$, respectively.

An \emph{assignment} for variable set $V$ is a mapping $\model: V\rightarrow\{0, 1\}$.
Given an assignment $\model$, a clause is \emph{satisfied} by $\model$ if at least one literal in it gets value $1$ under $\model$.
An assignment for $\var(\formula)$ is called a \emph{model} of $\formula$ if $\model$ satisfies all clauses in $\formula$.
We write $\model \models \formula$ to indicate that $\model$ is a model of $\formula$. 

\begin{definition}[Weighted Model Count]
    Let $\formula$ be a formula, $\ltr(\formula):=\bigcup_{x\in \var(\formula)}\{x, \nl{x}\}$ be the set of literals of variables in $\formula$, $\wfunc:\ltr(\formula)\rightarrow \mathbb{Z}^+$ be a \emph{weight function} that assigns a (positive integer) weight value  to each literal, and $\mathcal{A}(\formula)$ be the set of all possible assignments to $\var(\formula)$.
    The \emph{weighted model count} $\WMC(\formula, \wfunc)$ of formula $\formula$ is defined as 
    \[
         \WMC(\formula, \wfunc):=
         \sum_{\substack{\model\in \mathcal{A}(\formula)\\ \model \models \formula}}
         \Big(
         \quad~
         \prod_{\mathclap{\substack{x\in \var(\formula)\\ \model(x)=1}}}\wfunc(x)
         \cdot
         \prod_{\mathclap{\substack{y\in \var(\formula)\\ \model(y)=0}}}\wfunc(\nl{y})
         \Big)
         .
    \]
\end{definition}

In the Weighted Model Counting problem (WMC), given a formula $\formula$ and a weight function $\wfunc$, the goal is to compute the weighted model count $\WMC(\formula, \wfunc)$ of formula $\formula$.
We use weighted \#2-SAT and weighted \#3-SAT to denote the restricted versions of WMC where the inputs are $2$-CNF and $3$-CNF formulas, respectively (the definition of 2-CNF and 3-CNF can be found below).

A clause containing a single literal $\literal$ may be simply written as $(\literal)$.
We use $C_1C_2$ to denote the clause obtained by concatenating clauses $C_1$ and $C_2$.
For a formula $\formula$, we denote $\formula[\literal=1]$ as the resulting formula obtained from $\formula$ by removing all clauses containing literal $\literal$ and removing all literals $\nl{\literal}$ from all clauses in $\formula$.

The \emph{degree} of a variable $x$ in formula $\formula$, denoted by $\deg(x)$, is the total number of occurrences of literals $x$ and $\nl{x}$ in $\formula$.
A \emph{$d$-variable} (resp., \emph{$d^+$-variable}) is a variable with degree exactly $d$ (resp., at least $d$). 
The degree of a formula $\formula$, denoted by $\deg(\formula)$, is the maximum degree of all variables in $\formula$.
The \emph{length} of a clause $C$ is the number of literals in $C$. 
A clause is a \emph{$k$-clause} (resp., \emph{$k^-$-clause}) if its length is exactly $k$ (resp., at most $k$).
A formula $\formula$ is called \emph{$k$-CNF formula} if each clause in $\formula$ has length at most $k$.

We say a clause $C$ \emph{contains} a variable $x$ if $x\in \var(C)$.
Two variables $x$ and $y$ are \emph{adjacent} (and \emph{neighbors} of each other) if they appear together in some clause.
We denote by $N(x, \formula)$ (resp., $N_i(x, \formula)$) the set of neighbors (resp., the set of $i$-degree neighbors) of variable $x$ in formula $\formula$.
When $\formula$ is clear from the context, we may simply write $N(x)$ (resp., $N_i(x)$).

\subsection{Graph-related Concepts}

The following two prominent graph representations of a CNF formula, namely the \emph{primal graph} and the \emph{dual graph}, will be used in our algorithms.
\begin{definition}[Primal graphs]
    The \emph{primal graph} $G(F)$ of a formula $F$ is a graph where each vertex corresponds to a variable in the formula.
    Two vertices $x$ and $y$ are adjacent if and only if $x, y\in \var(C)$ for some clause $C\in F$.
\end{definition}

\begin{definition}[Dual graphs]
    The \emph{dual graph} $G^d(F)$ of a formula $F$ is the graph where each vertex corresponds to a clause in the formula.
    Two vertices $C_1$ and $C_2$ are adjacent if and only if $\var(C_1)\cap \var(C_2)\neq\emptyset$ for  $C_1,C_2\in \formula$.    
\end{definition}

We also use the concepts of \textit{path decomposition}s, which offer a way to decompose a graph into a path structure.

\begin{definition}[Path decompositions]
    A \emph{path decomposition} of a graph $G$ is a sequence $P = (X_1, \dots, X_r)$ of vertex subsets $X_i \subseteq V(G)$ $(i \in \{1, \dots, r\})$ such that: (1) $\bigcup_{i = 1}^r X_i = V(G)$; (2) For every $uv \in E(G)$, there exists $l \in \{1, \dots, r\}$ such that $X_l$ contains both $u$ and $v$; (3) For every $u \in V(G)$, if $u \in X_i \cap X_k$ for some $i \leq k$, then $u \in X_j$  for all $i \leq j \leq k$.
\end{definition}
The \emph{width} of a path decomposition $(X_1, \dots, X_r)$ is defined as $\max_{1 \leq i \leq r} \{|X_i|\} - 1$.
The \emph{pathwidth} of a graph $G$, denoted by $\pw(G)$, is the minimum possible width of any path decomposition of $G$.
The \emph{primal pathwidth} and \emph{dual pathwidth} of a formula are the pathwidths of its primal graph and dual graph, respectively.

The following is a known bound in terms of the pathwidth.

\begin{theorem}[\citep{journals/algorithmica/FominGSS09}]\label{thm:pw-lowd}
    For any $\epsilon > 0$, there exists an integer $n_\epsilon$ such that for every graph $G$ with $n > n_\epsilon$ vertices,
    \[
        \pw(G) \leq n_3/6+n_4/3 + n_{\geq 5} + \epsilon n,
    \]
    where $n_i (i\in \{3, 4\})$ is the number of vertices of degree $i$ in $G$ and $n_{\geq 5}$ is the number of vertices of degree at least 5. Moreover, a path decomposition of the corresponding width can be constructed in polynomial time.
\end{theorem}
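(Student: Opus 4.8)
The plan is to prove the bound by a short chain of reductions that defers the whole problem to the known pathwidth bound for subcubic graphs, so that the three coefficients $\tfrac16$, $\tfrac13$, and $1$ emerge transparently from how a vertex of each degree is charged. First I would peel off the high-degree vertices: let $S$ be the set of vertices of degree at least $5$, so $|S| = n_{\geq 5}$. For any $S$, inserting $S$ into every bag of a path decomposition of $G - S$ yields a path decomposition of $G$, because every edge of $G$ either lies inside $G - S$ or is incident to $S$ (which now sits in every bag); hence $\pw(G) \leq \pw(G - S) + |S|$. This reduces the task to a graph $G - S$ of maximum degree at most $4$, at the additive cost $n_{\geq 5}$, and it remains to show $\pw(G - S) \leq \tfrac16 n_3 + \tfrac13 n_4 + \epsilon n$.

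Next I would eliminate the degree-$4$ vertices by vertex splitting. Each degree-$4$ vertex $v$, with neighbours $a,b,c,d$, is replaced by two adjacent vertices $v_1, v_2$, where $v_1$ is made adjacent to $a, b$ and $v_2$ to $c, d$. Every original vertex keeps its degree and the two new vertices have degree $3$, so the resulting graph $G''$ is subcubic; contracting each edge $v_1 v_2$ recovers $G - S$, so $G - S$ is a minor of $G''$, and the minor-monotonicity of pathwidth (the pathwidth of a minor never exceeds that of the host graph) gives $\pw(G - S) \leq \pw(G'')$. The key bookkeeping is that each degree-$4$ vertex becomes exactly two degree-$3$ vertices, which is precisely why its eventual charge is $2 \cdot \tfrac16 = \tfrac13$, whereas genuine degree-$3$ vertices stay degree-$3$ and vertices of degree at most $2$ stay of degree at most $2$. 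Thus the number of degree-$3$ vertices of $G''$ is $n_3(G-S) + 2\,n_4(G-S)$.

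Finally I would invoke the subcubic bound of Fomin and H\o{}ie in its degree-sensitive form: a subcubic graph $H$ on $N$ vertices satisfies $\pw(H) \leq \tfrac16\,(\text{number of degree-}3\text{ vertices of } H) + \epsilon N$, so that vertices of degree at most $2$ incur no charge beyond the slack. Applying this to $G''$ gives $\pw(G'') \leq \tfrac16\bigl(n_3(G-S) + 2\,n_4(G-S)\bigr) + \epsilon' N$. Since the charge $c(d)$ ($0$ for $d\leq 2$, $\tfrac16$ for $d=3$, $\tfrac13$ for $d=4$) is nondecreasing and each vertex of $G-S$ has degree at most its degree in $G$ (vertices of degree $\geq 5$ having been placed in $S$), one has $\tfrac16 n_3(G-S) + \tfrac13 n_4(G-S) \leq \tfrac16 n_3 + \tfrac13 n_4$; folding $N \leq n + n_4 \leq 2n$ into the error term then yields $\pw(G) \leq \tfrac16 n_3 + \tfrac13 n_4 + n_{\geq 5} + \epsilon n$. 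The threshold $n_\epsilon$ is inherited from the subcubic base case, and every step—splitting, contraction, and the constructive subcubic decomposition—is polynomial time, giving the claimed constructibility.

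I expect the main obstacle to live entirely in the base case rather than in these reductions. Establishing the sharp constant $\tfrac16$ for subcubic graphs is the delicate part: it rests on the Monien--Preis result that cubic graphs admit a bisection with at most $(\tfrac16 + \epsilon)N$ crossing edges, combined with a careful recursive, amortized conversion of such balanced separators into a linear layout that avoids the logarithmic and constant-factor blow-ups a naive separator recursion would incur. The second point requiring care is verifying that degree-$\leq 2$ vertices genuinely contribute nothing—most cleanly by smoothing degree-$2$ vertices and deleting degree-$\leq 1$ vertices before running the bisection argument—so that the refined, degree-$3$-only form of the subcubic bound is actually available.
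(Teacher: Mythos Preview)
This theorem is not proved in the paper at all: it is quoted verbatim from \cite{journals/algorithmica/FominGSS09} and used purely as a black box in the analyses of $\algtwocnf$ (Lemma~\ref{lem:2cnf-pw}) and $\algthreecnf$ (Lemma~\ref{lem:3CNF-pw-step}). There is consequently no in-paper argument to compare your proposal against.

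For what it is worth, your route---strip the $5^+$-vertices into every bag, split each degree-$4$ vertex into an adjacent pair of degree-$3$ vertices, and invoke the Fomin--H{\o}ie subcubic bound---is a legitimate derivation of the statement, and you have correctly located the two places where care is required. One small caution on the final reduction: the claim that smoothing degree-$2$ vertices and deleting degree-$\leq 1$ vertices can be undone without raising pathwidth is not literally true unqualified---subdividing the edges of $K_{1,3}$ raises the pathwidth from $1$ to $2$, so subdivision is not pathwidth-preserving below width $2$. This does not break your argument, since any nonempty $3$-regular core already has pathwidth at least $2$ (and in any case an additive $O(1)$ error is absorbed by the $\epsilon N$ slack), but the write-up should say so explicitly rather than appeal to an unrestricted ``subdivision preserves pathwidth'' principle.
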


\citet{journals/jda/SamerS10} introduced fast algorithms for \#SAT parameterized by primal pathwidth and dual pathwidth. With minor modifications, these algorithms can be adapted to solve the weighted version, specifically WMC, without increasing the time complexity.

\begin{theorem}[\citep{journals/jda/SamerS10}]\label{thm:alg-primal-pw}
    Given an instance $(\formula, \wfunc)$ of WMC and a path decomposition $P$ of $G(\formula)$, 
    there is an algorithm (denoted by $\algpripw(\formula, \wfunc, P)$) that solves WMC in time $\bigOs(2^{p})$, where $p$ is the width of $P$.
\end{theorem}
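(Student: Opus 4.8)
The plan is to establish the result by the standard dynamic-programming paradigm over a path decomposition, verifying that the bookkeeping for literal weights can be folded into the recurrence without increasing the running time. Since the statement is essentially that of \citet{journals/jda/SamerS10} with weights added, the core of the argument is identical to the unweighted DP, and the only genuinely new point to check is that the weighted count can be maintained as cheaply as a plain count. First I would preprocess $P = (X_1, \dots, X_r)$ into a \emph{nice} path decomposition of the same width $p$ in polynomial time: one beginning and ending with an empty bag, in which each consecutive pair of bags differs by the \emph{introduction} or \emph{forgetting} of a single variable. The number of bags then becomes $\bigO(n(\formula))$, so a per-bag cost of $\bigOs(2^p)$ yields the claimed overall bound. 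I would then record the key structural consequence of working with the primal graph: the variables of any clause $C\in\formula$ form a clique in $G(\formula)$, and every clique of a graph is contained in some bag of a path decomposition; hence each clause $C$ can be assigned to a bag $X_{\iota(C)}$ with $\var(C)\subseteq X_{\iota(C)}$, and property~(3) guarantees that once a variable is forgotten it never recurs, so no later clause involves it.

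The DP would maintain, for each bag $X_i$ and each of the at most $2^{p+1}$ assignments $\alpha\colon X_i\to\{0,1\}$, a single number $A_i[\alpha]$, defined as the weighted sum over all assignments $\model$ to $V_i := X_1\cup\dots\cup X_i$ that agree with $\alpha$ on $X_i$ and satisfy every clause assigned to a bag among $X_1,\dots,X_i$, where the weight of $\model$ is the product of $\wfunc(x)$ (resp.\ $\wfunc(\nl{x})$) taken over the forgotten variables $x\in V_i\setminus X_i$ that $\model$ sets to $1$ (resp.\ $0$). Processing bag $X_i$ consists of a transition from $X_{i-1}$ followed by enforcing the clauses assigned to $X_i$, i.e.\ zeroing every $A_i[\alpha]$ whose restriction to some such clause's variables falsifies it. At an \emph{introduce} step adding variable $v$, each entry is copied to both extensions of $\alpha$ onto $X_i=X_{i-1}\cup\{v\}$, charging no weight yet; at a \emph{forget} step removing $v$, the two entries of $A_{i-1}$ that agree off $v$ are combined, multiplying the $v=1$ entry by $\wfunc(v)$ and the $v=0$ entry by $\wfunc(\nl{v})$ before summing. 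Each bag touches $\bigOs(2^{p})$ cells and performs only additions and multiplications of the running numbers --- precisely where the weighted version departs from plain counting, in which the multipliers are all~$1$ --- so the asymptotic running time is unchanged. Starting from $A_0[\emptyset]=1$ and reading $A_r[\emptyset]$ off the final empty bag yields $\WMC(\formula, \wfunc)$.

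The main obstacle I anticipate is not the recurrence but its correctness invariant: I must argue that, for the weighted count to come out right, every clause is enforced exactly once and every variable's literal weight is multiplied in exactly once. The first hinges on assigning each clause to a single bag containing all its variables (possible by the clique property) and checking it only there; the second hinges on the nice decomposition forgetting every variable exactly once, at which moment its weight factor is charged. Property~(3) is the linchpin for both, since it forbids a forgotten variable from reappearing, which ensures that the partial assignments glued along the path are globally consistent and that no clause assigned to a later bag can reference an already-summed-out variable. The induction on $i$ then shows the invariant is preserved by both transitions and by clause enforcement, completing the proof.
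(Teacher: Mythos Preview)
Your proposal is correct and is precisely the standard dynamic-programming argument over a (nice) path decomposition of the primal graph, with the weight factors folded in at forget nodes. The paper does not actually prove this theorem: it is quoted from \citet{journals/jda/SamerS10} as a black box, with only the remark that ``with minor modifications, these algorithms can be adapted to solve the weighted version \dots\ without increasing the time complexity.'' Your sketch makes that minor modification explicit and matches what the cited reference does, so there is nothing to compare beyond noting that you have supplied details the present paper deliberately omits.
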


\begin{theorem}[\citep{journals/jda/SamerS10}]\label{thm:alg-dual-pw}
    Given an instance $(\formula, \wfunc)$ of WMC and a path decomposition $P$ of $G^d(\formula)$, 
    there is an algorithm (denoted by $\algdualpw(\formula, \wfunc, P)$) that solves WMC in time $\bigOs(2^{p})$, where $p$ is the width of $P$.
\end{theorem}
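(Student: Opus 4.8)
The plan is to prove the theorem constructively by describing $\algdualpw$ as a left-to-right dynamic program over the path decomposition $P=(X_1,\dots,X_r)$ of the dual graph $G^d(\formula)$, following the scheme of Samer and Szeider and adding weight bookkeeping. First I would turn $P$ into a \emph{nice} path decomposition in polynomial time without increasing its width, so that $X_1=X_r=\emptyset$ and consecutive bags differ by a single clause that is either \emph{introduced} or \emph{forgotten}; each bag is a set of at most $p+1$ clauses. The DP table $T_i$ is indexed by a function $s\colon X_i\to\{0,1\}$, one bit per clause in the bag, where $s(C)=1$ is intended to mean that $C$ has already been satisfied by the partial assignment built so far. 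The entry $T_i[s]$ stores the total weight of all partial assignments consistent with $s$ (satisfying exactly the flagged clauses of $X_i$) under which every already-forgotten clause is satisfied. Since $\card{X_i}\le p+1$, each table has at most $2^{p+1}$ entries and there are polynomially many steps, so if every transition runs in polynomial time the total running time is $\bigOs(2^{p})$, as claimed.

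The crux — and the step I expect to be the main obstacle — is justifying that a single satisfaction bit per clause suffices, since a bag may contain clauses spanning many more than $p+1$ variables and tracking a full assignment to them would cost $2^{\Theta(p)}$ with a bad constant in the exponent (indeed $2^{\Theta(kp)}$ for $k$-CNF). The enabling structural fact is that, for every variable $x$, all clauses containing $x$ are pairwise adjacent in $G^d(\formula)$ and hence form a clique, and every clique lies inside a common bag of $P$. I would therefore fix for each $x$ a \emph{home} bag $h(x)$ that contains all clauses in which $x$ occurs and decide the value of $x$ exactly at step $h(x)$, scheduling this assignment step before any forget that removes a clause of $x$. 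Because all clauses of $x$ are simultaneously present at step $h(x)$, choosing $x=1$ or $x=0$ lets me flip the satisfaction bit of every clause that the chosen literal makes true immediately; the value of $x$ is then never needed again. The correctness argument I must carry through is that this scheduling is feasible — every variable of a clause $C$ has its home bag in the interval during which $C$ is present, so $C$'s variables are all assigned before $C$ is forgotten — and that, consequently, each variable is assigned exactly once and the invariant on $T_i$ is preserved by every transition.

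It then remains to spell out the transitions and the weight accounting. At an introduce of $C$ no variable of $C$ has been assigned yet (its home bags all come at or after this step), so I extend each state by setting the new bit to $0$; at a forget of $C$ I keep only the states with $s(C)=1$ and drop the coordinate, discarding the rest since such a $C$ can no longer be satisfied; and at the assignment of $x$ at step $h(x)$ I branch on the two values, multiply the incoming weight by the literal weight $\wfunc(x)$ or $\wfunc(\nl{x})$, set to $1$ the bits of the clauses the chosen literal satisfies, and sum the two branches into the updated table. Since each variable contributes exactly one literal weight along any branch, the product accumulated for a surviving assignment equals its weight in $\WMC(\formula,\wfunc)$, and the forget rule guarantees that only genuine models survive. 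Initializing $T_0[\emptyset]=1$ and reading off $T_r[\emptyset]$ then yields $\WMC(\formula,\wfunc)$; the only change from the unweighted Samer--Szeider algorithm is the literal-weight multiplication at assignment steps, which does not affect the $\bigOs(2^{p})$ bound.
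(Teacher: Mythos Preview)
The paper does not prove this theorem; it simply cites \citet{journals/jda/SamerS10} and remarks that the weighted extension is a ``minor modification.'' Your proposal correctly reconstructs exactly that algorithm: the nice path decomposition over $G^d(\formula)$, the one-bit-per-clause state, the clique argument that lets each variable be assigned at a home bag containing all its clauses, and the literal-weight multiplication at the assignment step. This is the Samer--Szeider dual-width dynamic program with the obvious weight bookkeeping, so your write-up is essentially what the paper would have said had it not deferred to the citation.
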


\subsection{Branch-and-Search Algorithms}
A \emph{branch-and-search algorithm} first applies reduction rules to reduce the instance and then searches for a solution by branching.
We need to use a measure to evaluate the size of the search tree generated in the algorithm.
Let $\mu$ be the measure and $T(\mu)$ be an upper bound on the size of the search tree generated by the algorithm on any instance with the measure of at most $\mu$.
A branching operation, which branches on the instance into $l$ branches with the measure decreasing by at least $a_i>0$ in the $i$-th branch, is usually represented by a recurrence relation
\[
 T(\mu)\leq T(\mu-a_1)+\dots + T(\mu-a_l),
\]
or simply by a \emph{branching vector} $(a_1,\dots,a_l)$.
The \emph{branching factor} of the recurrence, denoted by $\tau(a_1,\dots,a_l)$, is the largest root of the function $f(x) = 1-\sum_{1\leq i\leq l} x^{-a_i}$.
If the maximum branching factor for all branching operations in the algorithm is at most $\gamma$, then $T(\mu)=O(\gamma^\mu)$.
More details about analyzing branching algorithms can be found in \citep{series/txtcs/FominK10}. We say that one branching vector is \emph{not worse} than the other if its corresponding branching factor is not greater than that of the latter.
The following useful property about branching vectors can be obtained from Lemma~2.2 and Lemma~2.3 in~\citep{series/txtcs/FominK10}.
\begin{lemma}\label{lem:vec-dominate}
 A branching vector $(a_1, a_2)$ is not worse than $(p-q, q)$ (or $(q, p-q)$) if $a_1+a_2\geq p$ and $a_1,a_2\geq q>0$.
\end{lemma}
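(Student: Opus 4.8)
The plan is to deduce the claim from the two structural facts about branching factors cited as Lemma~2.2 and Lemma~2.3 of~\citep{series/txtcs/FominK10}: namely that $\tau$ is non-increasing in each coordinate (enlarging a decrease cannot increase the factor), and that for a fixed coordinate sum a more balanced vector has a no-larger factor. Since $\tau(p-q,q)=\tau(q,p-q)$, it suffices to treat the target $(p-q,q)$, and by the symmetry $\tau(a_1,a_2)=\tau(a_2,a_1)$ I may assume $a_1\ge a_2$.

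The key step is to interpose the auxiliary vector $(a_1+a_2-q,\,q)$, which has the same coordinate sum as $(a_1,a_2)$ and shares the coordinate $q$ with the target. First I would apply the balancing property to $(a_1,a_2)$ and $(a_1+a_2-q,q)$: both have sum $a_1+a_2$, and $(a_1,a_2)$ is the more balanced of the two precisely when $\card{a_1-a_2}\le (a_1+a_2-q)-q=a_1+a_2-2q$ (note $a_1+a_2-q\ge q$ since $a_1,a_2\ge q$, so $q$ is indeed the smaller coordinate of the auxiliary vector). Under the assumption $a_1\ge a_2$ this inequality simplifies to $a_2\ge q$, which is exactly one of the hypotheses; hence $\tau(a_1,a_2)\le \tau(a_1+a_2-q,q)$.

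Next I would apply coordinate-wise monotonicity to compare $(a_1+a_2-q,q)$ with $(p-q,q)$. The two vectors agree in the second coordinate, while $a_1+a_2-q\ge p-q$ follows immediately from $a_1+a_2\ge p$; enlarging the first coordinate cannot increase the factor, so $\tau(a_1+a_2-q,q)\le \tau(p-q,q)$. Chaining the two inequalities yields $\tau(a_1,a_2)\le\tau(p-q,q)$, as required.

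I expect no serious obstacle here: the argument is a two-step chain once the intermediate vector is chosen correctly. The only point demanding care is verifying the balancing hypothesis $\card{a_1-a_2}\le a_1+a_2-2q$, where it is essential that the smaller coordinate is at least $q$; this is exactly where the assumption $a_1,a_2\ge q>0$ enters, and dropping it would break the reduction. I would also double-check that the cited Lemma~2.2 and Lemma~2.3 are invoked in the direction in which they are actually stated (monotonicity versus balancing), as a misattribution would be cosmetic but worth confirming against the reference.
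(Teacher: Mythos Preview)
Your proposal is correct and is exactly the derivation the paper has in mind: the paper does not spell out a proof but merely points to Lemmas~2.2 and~2.3 of~\citep{series/txtcs/FominK10}, and your two-step chain via the intermediate vector $(a_1+a_2-q,\,q)$ is precisely how one combines monotonicity and the balancing property to obtain the claim. The verification that $\lvert a_1-a_2\rvert \le a_1+a_2-2q$ reduces to $\min(a_1,a_2)\ge q$ is the only nontrivial check, and you have handled it correctly.
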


\section{Framework of Algorithms}

An instance of WMC is denoted as $\ins=\shortins$.
For the sake of describing recursive algorithms,
we use $\ins=\longins$ to denote an instance, where $W$ is a positive integer and the solution to this instance is $W\cdot \WMC(F,w)$. Initially, it holds that $W=1$, which corresponds to the original WMC.

Our algorithms for weighted \#2-SAT and weighted \#3-SAT adopt the same framework, which contains three major phases. 
The \emph{first phase} is to apply some reduction rules to simplify the instance. 
The reduction rules we use in the algorithm will be introduced in the next subsection.

The \emph{second phase} is to branch on some variable by assigning either $1$ or $0$ to it. 
This phase will create branching vectors and exponentially increase the running time of the algorithm.
Specifically, branching on variable $x$ in an algorithm \texttt{Alg} means doing the following:
\begin{itemize}
    \item  $W_t\gets \texttt{Alg}(\formula[x=1], \wfunc, \solW)$;
    \item $W_f\gets \texttt{Alg}(\formula[x=0], \wfunc, \solW)$;
    \item Return $w(x)\cdot W_t + w(\nl{x})\cdot W_f$.
\end{itemize}

In our algorithms, we may only branch on variables of (relatively) high degree.
When all variables have a low degree, the corresponding primal or dual graphs usually have a small pathwidth. 
In this case, we will apply Theorem~\ref{thm:pw-lowd} to obtain a path decomposition with small width, and then invoke the algorithms in Theorem~\ref{thm:alg-primal-pw} and Theorem~\ref{thm:alg-dual-pw} to solve the problem. 
This is the \emph{third phase} of our algorithms.
Before introducing our algorithms for weighted \#2-SAT and weighted \#3-SAT, we first introduce our reduction rules for general WMC.
Since our reduction rules are applicable for general WMC, they can also be applied to both weighted \#2-SAT and weighted \#3-SAT.

\subsection{Reduction Rules}
A reduction rule takes $\ins=\longins$ as input and outputs a new instance $\ins'=(\formula', \wfunc', \solW')$.
A reduction rule is \emph{correct} if $\WMC(\formula, \wfunc)\cdot\solW = \WMC(\formula', \wfunc')\cdot\solW'$ holds.

In total, we have nine reduction rules. 
When we consider a rule, we assume that all previous rules are not applicable.

The first four reduction rules are simple and well-known.
\begin{rrule}[Elimination of duplicated literals]\label{rrule:duplicated}
    If a clause $C$ contains duplicated literals $\literal$, remove all but one $\literal$ in $C$.
\end{rrule}

\begin{rrule}[Elimination of tautology]\label{rrule:taut}
    If a clause $C$ contains two complementary literals $\literal$ and $\nl{\literal}$, remove clause $C$.
\end{rrule}

\begin{rrule}[Elimination of subsumptions]\label{rrule:sub}
    If there are two clauses $C$ and $D$ such that $C\subseteq D$, remove clause $D$.
\end{rrule}

\begin{rrule}[Elimination of $1$-clauses]\label{rrule:1cls}
    If there is a $1$-clause $(\literal)$, then $W\gets W\cdot w(\literal)$, and $\formula \gets \formula[\literal=1]$.
\end{rrule}

In the algorithms, we may also generate $0$-variables that are unassigned yet. The following rule can eliminate them. 
\begin{rrule}[Elimination of $0$-variables]\label{rrule:0var}
    If there is a unassigned variable $x$ with $\deg(x)=0$, let $\solW \gets \solW \cdot (w(x)+w(\nl{x}))$ and remove variable $x$.
\end{rrule}

Next two rules are going to deal with some $2$-clauses.
\begin{rrule}\label{rrule:2cls-1comp}
    If there are two clauses $\literal_{a} \literal_{b}$ and $\literal_{a}\nl{\literal_{b}}C$ in $F$, then remove literal $\nl{\literal_{b}}$ from clause $\literal_{a}\nl{\literal_{b}}C$.
\end{rrule}

\begin{rrule}\label{rrule:2cls-2comp}
    If there are two clauses $\literal_{a} \literal_{b}$ and $\nl{\literal_{a}}\nl{\literal_{b}}$ in $F$, do the following:
    \begin{enumerate}
        \item $\wfunc(\nl{\literal_{b}})\gets \wfunc(\nl{\literal_{b}})\cdot \wfunc(\literal_{a})$, 
                $\wfunc(\literal_{b})\gets \wfunc(\literal_{b})\cdot \wfunc(\nl{\literal_{a}})$;
        \item replace $\literal_{a}$ (resp., $\nl{\literal_{a}}$) with $\nl{\literal_{b}}$ (resp., $\literal_{b}$) in $F$;
        \item remove $\var(\literal_{a})$ and apply R-Rule~\ref{rrule:taut} as often as possible. 
    \end{enumerate}
      
\end{rrule}

We use $\algbf\shortins$ to denote the brute-force algorithm that solves WMC by enumerating all possible assignments.
Clearly, $\algbf\shortins$ runs in $\bigOs(2^{n(\formula)})$ time
and it uses constant time if the formula has only a constant number of variables. The following two rules are based on a divide-and-conquer idea. However, we only apply them to the cases where one part is of constant size. 

\begin{rrule}\label{rrule:disjoint}
    If formula $\formula$ can be partitioned into two non-empty sub-formulas $F_1$ and $F_2$ with $\var(F_1)\cap \var(F_2)=\emptyset$ and $n(F_1)\leq 10$, do the following:
    \begin{enumerate}
        \item $\solW'\gets \algbf(F_1, \wfunc)$;
        \item $\solW\gets \solW\cdot \solW'$, and $\formula\gets F_2$.
    \end{enumerate}
\end{rrule}

\begin{rrule}\label{rrule:cut}
    If there is a variable $x$ such that formula $\formula$ can be partitioned into two non-empty sub-formulas $F_1$ and $F_2$, with $\var(F_1)\cap \var(F_2)=\{x\}$ and $n(F_1)\leq 10$, do the following:
    \begin{enumerate}
        \item $\solW_t\gets \algbf(F_1[x=1], \wfunc)$;
        \item $\solW_f\gets \algbf(F_1[x=0], \wfunc)$;
        \item $w(x)\gets w(x)\cdot \solW_t$, and $w(\nl{x})\gets w(\nl{x})\cdot \solW_f$;
        \item $\formula \gets F_2$.
    \end{enumerate}
\end{rrule}

\begin{lemma}\label{lem:reduction-correct}
    All of the above nine reduction rules are correct.
\end{lemma}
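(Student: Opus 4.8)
The plan is to verify each of the nine reduction rules individually, showing that the invariant $\WMC(\formula, \wfunc)\cdot\solW = \WMC(\formula', \wfunc')\cdot\solW'$ is preserved. Since the rules are independent, I would dispatch them in order, grouping the trivial ones. R-Rules~\ref{rrule:duplicated} and~\ref{rrule:taut} leave $\WMC(\formula,\wfunc)$ literally unchanged: a set-valued clause already identifies duplicated literals, and a clause with complementary literals is satisfied by every assignment, so deleting it changes neither the set of models nor their weights, and $\solW$ is untouched. For R-Rule~\ref{rrule:sub}, if $C\subseteq D$ then every model satisfying $C$ automatically satisfies $D$, so $D$ is redundant and the model set is preserved. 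For R-Rule~\ref{rrule:1cls}, I would partition $\mathcal{A}(\formula)$ by the value assigned to $\literal$: only assignments with $\literal=1$ can be models (the clause $(\literal)$ forces it), and each such model $\model$ contributes the factor $\wfunc(\literal)$ to its weight; the remaining clauses, with $\nl\literal$ stripped out, are exactly $\formula[\literal=1]$, so $\WMC(\formula,\wfunc)=\wfunc(\literal)\cdot\WMC(\formula[\literal=1],\wfunc)$, matching the update $\solW\gets\solW\cdot\wfunc(\literal)$. R-Rule~\ref{rrule:0var} is the dual bookkeeping fact: a degree-$0$ variable $x$ appears in no clause, so each model of $\formula\setminus\{x\}$ extends to exactly two models of $\formula$, contributing $\wfunc(x)+\wfunc(\nl{x})$ to the total weight, which is absorbed into $\solW$.

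Next I would treat the two $2$-clause rules, which are the first to require a genuine weight argument. For R-Rule~\ref{rrule:2cls-1comp}, the presence of $\literal_a\literal_b$ means every model satisfies $\literal_a\lor\literal_b$; I would argue that under this constraint the clause $\literal_a\nl{\literal_b}C$ is satisfied exactly when $\literal_aC$ is, because if $\nl{\literal_b}$ is the only true literal then $\literal_b=0$ forces $\literal_a=1$ via the first clause, so $\nl{\literal_b}$ is never the witness that matters. Hence removing $\nl{\literal_b}$ preserves all models and their weights. R-Rule~\ref{rrule:2cls-2comp} is the most delicate of the simple rules: the pair $\literal_a\literal_b$ and $\nl{\literal_a}\nl{\literal_b}$ together force $\literal_a$ and $\literal_b$ to take opposite truth values (i.e. $\var(\literal_a)$ and $\var(\literal_b)$ are ``anti-linked''). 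I would show that eliminating $\var(\literal_a)$ by substituting $\nl{\literal_b}$ for $\literal_a$ sets up a weight-preserving bijection between models: each surviving model fixes $\literal_a=\nl{\literal_b}$, and the weight factor formerly contributed by $\var(\literal_a)$ must be folded into the weights of the $\literal_b$ literals, which is precisely what the update $\wfunc(\nl{\literal_b})\gets\wfunc(\nl{\literal_b})\cdot\wfunc(\literal_a)$ and $\wfunc(\literal_b)\gets\wfunc(\literal_b)\cdot\wfunc(\nl{\literal_a})$ accomplishes. I would check that both tautologies created by the substitution (cleaned up by R-Rule~\ref{rrule:taut}) and the elimination of the two original clauses leave the weighted count intact.

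Finally, the divide-and-conquer rules R-Rule~\ref{rrule:disjoint} and R-Rule~\ref{rrule:cut} rest on the multiplicativity of $\WMC$ over variable-disjoint subformulas. For R-Rule~\ref{rrule:disjoint}, since $\var(F_1)\cap\var(F_2)=\emptyset$, every assignment factors as a pair of independent assignments and the weight of a joint model is the product of the two partial weights, so $\WMC(\formula,\wfunc)=\WMC(F_1,\wfunc)\cdot\WMC(F_2,\wfunc)=\solW'\cdot\WMC(F_2,\wfunc)$; correctness of $\algbf$ on $F_1$ (which has at most $10$ variables, hence runs in constant time) supplies $\solW'=\WMC(F_1,\wfunc)$. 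R-Rule~\ref{rrule:cut} generalizes this to a single shared variable $x$: conditioning on the value of $x$ makes $F_1$ and $F_2$ independent, so I would split the count as $\sum_{b\in\{0,1\}}\WMC(F_1[x=b],\wfunc)\cdot(\text{contribution of the $F_2$ side with }x=b)$; folding the factor $\solW_t=\WMC(F_1[x=1],\wfunc)$ into $\wfunc(x)$ and $\solW_f$ into $\wfunc(\nl{x})$ exactly reweights $x$ in the retained formula $F_2$ so that the total is preserved.

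I expect the main obstacle to be R-Rule~\ref{rrule:2cls-2comp}: it is the one rule that simultaneously substitutes a literal throughout $\formula$, reweights two literals, removes a variable, and triggers further tautology cleanup, so the bijection between old and new models must be set up carefully and the weight update verified on each side of the anti-link. The divide-and-conquer rules require no cleverness beyond multiplicativity, and the remaining rules are immediate once models and weights are tracked; so once R-Rule~\ref{rrule:2cls-2comp} is nailed down, the lemma follows by combining the nine individual correctness statements.
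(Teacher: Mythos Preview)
Your proposal is correct and mirrors the paper's own proof almost exactly: both go rule by rule, handling R-Rules~\ref{rrule:duplicated}--\ref{rrule:0var} by noting the model set (or its weighted sum) is unchanged, R-Rule~\ref{rrule:2cls-1comp} by a short logical equivalence, R-Rule~\ref{rrule:2cls-2comp} by the anti-link bijection with the reweighting of $\literal_b$, and R-Rules~\ref{rrule:disjoint}--\ref{rrule:cut} by multiplicativity of $\WMC$ over disjoint variable sets plus conditioning on $x$. The only cosmetic difference is that for R-Rule~\ref{rrule:2cls-1comp} the paper does a two-case split on $\sigma(\literal_a)$ whereas you argue the contrapositive (``$\nl{\literal_b}$ can never be the sole witness''), which is equivalent.
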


\begin{proof}
For convenience, we extend an assignment $\sigma$ from variables to literals by setting $\sigma(\nl{x})=1-\sigma(x)$ for any variable $x$.
According to our definition of correctness, we need to prove that for each instance $\mathcal{I} = (F, w, W)$, the instance $\mathcal{I}' = (F', w', W')$ output by each R-Rule satisfies $\WMC(F, w) \cdot W = \WMC(F', w') \cdot W'$. The correctness of the first six rules is relatively straightforward, and we only provide brief explanations here. We will give detailed proofs for the last three rules. Note that during the execution of R-Rule~\ref{rrule:duplicated}, R-Rule~\ref{rrule:taut}, R-Rule~\ref{rrule:sub}, and R-Rule~\ref{rrule:2cls-1comp}, both $w$ and $W$ remain unchanged. For these rules, we only need to show that an assignment satisfies the original formula if and only if it satisfies the new formula.

    \textbf{R-Rule~\ref{rrule:duplicated}:} A literal and its duplicates in a clause $C$ are assigned the same value by any assignment $\sigma$. Thus, removing the duplicates does not change the satisfiability of $C$ under $\sigma$. The set of satisfying assignments for $F$ remains unchanged.

    \textbf{R-Rule~\ref{rrule:taut}:} Any assignment $\sigma$ over $\var(F)$ satisfies a clause $C$ that contains a literal and its negation. Thus, removing $C$ does not affect the set of satisfying assignments for the formula.

    \textbf{R-Rule~\ref{rrule:sub}:} If $C \subseteq D$, any assignment $\sigma$ that satisfies $C$ must also satisfy $D$. Therefore, clause $D$ is redundant if clause $C$ must be satisfied. Removing $D$ does not change the set of satisfying assignments for the formula.

    \textbf{R-Rule~\ref{rrule:1cls}:} Any satisfying assignment for $F$ must assign the literal $\literal$ to $1$. The total weight of satisfying assignments for $F$ is the sum of weights of assignments that satisfy both $\literal=1$ and the resulting formula $F[\literal=1]$. This can be expressed as $\WMC(F, w) = w(\literal) \cdot \WMC(F[\literal=1], w)$. The rule sets $F' = F[\literal=1]$ and $W' = W \cdot w(\literal)$, while $w'=w$. Thus, $\WMC(F, w) \cdot W = (w(\literal) \cdot \WMC(F', w')) \cdot W = \WMC(F', w') \cdot W'$.

    \textbf{R-Rule~\ref{rrule:0var}:} A variable $x$ with $\deg(x)=0$ does not appear in any clause. Thus, its value does not affect the satisfiability of $F$. The total weighted model count can be factored as $\WMC(F, w) = (w(x) + w(\nl x)) \cdot \WMC(F', w)$, where $F'$ is the resulting formula after removing variable $x$ from the variable set of formula $F$. The rule sets $W' = W \cdot (w(x) + w(\nl x))$ and $F'$ as described. Thus, the correctness of this rule holds.

    \textbf{R-Rule~\ref{rrule:2cls-1comp}:} Let the two clauses be $C_1 = (\literal_a \lor \literal_b)$ and $C_2 = (\literal_a \lor \nl{\literal_b} \lor C)$. The rule transforms $C_2$ into $C_2' = (\literal_a \lor C)$. We show that for any assignment $\sigma$, it holds that $\sigma \models (C_1 \land C_2)$ if and only if $\sigma \models (C_1 \land C_2')$ by considering the following two cases.
    \begin{itemize}
        \item If $\sigma(\literal_a) = 1$, both $C_1, C_2, C_2'$ are satisfied. The equivalence holds.
        \item If $\sigma(\literal_a) = 0$, for $\sigma$ to satisfy $(C_1 \land C_2)$, it must satisfy $(\literal_b)$ and $(\nl{\literal_b} \lor C)$, which implies $\sigma(\literal_b)=1$ and $\sigma \models C$. For $\sigma$ to satisfy $(C_1 \land C_2')$, it must satisfy $(\literal_b)$ and $(C)$, which also implies $\sigma(\literal_b)=1$ and $\sigma \models C$. The equivalence holds.
    \end{itemize}
    Since for any assignment, it satisfies the original formula if and only if it satisfies the new formula, the rule is correct.

    \textbf{R-Rule~\ref{rrule:2cls-2comp}:} The presence of clauses $(\literal_a \lor \literal_b)$ and $(\nl{\literal_a} \lor \nl{\literal_b})$ implies that any satisfying assignment $\sigma$ for $F$ must have $\sigma(\literal_a) \neq \sigma(\literal_b)$, which is equivalent to $\sigma(\literal_a) = \sigma(\nl{\literal_b})$. We establish a bijection between the satisfying assignments of $F$ and $F'$.
    For any satisfying assignment $\sigma$ of $F$, we define an assignment $\sigma'$ for $F'$ by setting $\sigma'(y) = \sigma(y)$ for all $y \in \var(F')$. Since $\var(\literal_a)$ is removed and all its occurrences are replaced, $\sigma'$ is well-defined.
    The weight of a satisfying assignment $\sigma$ for $F$ can be factored based on the value of $\literal_a$.
    \begin{itemize}
        \item If $\sigma(\literal_a)=1$, then $\sigma(\literal_b)=0$. The weight contribution from these two variables is $w(\literal_a) \cdot w(\nl{\literal_b})$. In $F'$, $\literal_a$ is replaced by $\nl{\literal_b}$. The corresponding assignment $\sigma'$ has $\sigma'(\nl{\literal_b})=1$. The new weight is $w'(\nl{\literal_b}) = w(\literal_a) \cdot w(\nl{\literal_b})$, matching the contribution.
        \item If $\sigma(\literal_a)=0$, then $\sigma(\literal_b)=1$. The weight contribution is $w(\nl{\literal_a}) \cdot w(\literal_b)$. In $F'$, $\nl{\literal_a}$ is replaced by $\literal_b$. The corresponding assignment $\sigma'$ has $\sigma'(\literal_b)=1$. The new weight is $w'(\literal_b) = w(\nl{\literal_a}) \cdot w(\literal_b)$, matching the contribution.
    \end{itemize}
    This establishes a weight-preserving bijection between the satisfying assignments of $F$ and $F'$. Thus, $\WMC(F, w) = \WMC(F', w')$, and the rule is correct.

    \textbf{R-Rule~\ref{rrule:disjoint}:} Since $\var(F_1) \cap \var(F_2) = \emptyset$, the set of satisfying assignments of $F$ is the Cartesian product of the satisfying assignments of $F_1$ and $F_2$. Due to the multiplicative nature of weights, the WMC of $F$ is the product of the WMCs of its components:
    $$ \WMC(F, w) = \WMC(F_1, w) \cdot \WMC(F_2, w). $$
    The rule computes $\solW' = \WMC(F_1, w)$ (in constant time as $n(F_1) \le 10$), sets $F' = F_2$, and updates the global multiplier to $W_{new} = W_{old} \cdot \solW'$.
    We have $W_{old} \cdot \WMC(F, w) = W_{old} \cdot (\WMC(F_1, w) \cdot \WMC(F_2, w)) = (W_{old} \cdot \solW') \cdot \WMC(F_2, w) = W_{new} \cdot \WMC(F', w')$. Thus, the rule is correct.

    \textbf{R-Rule~\ref{rrule:cut}:} The total WMC of $F$ can be computed by conditioning on the value of the variable $x$:
    $$ \WMC(F, w) = w(x) \cdot \WMC(F[x=1], w) + w(\nl{x}) \cdot \WMC(F[x=0], w). $$
    Since $\var(F_1)\cap \var(F_2)=\{x\}$, we have $\WMC(F[x=1], w) = \WMC(F_1[x=1], w) \cdot \WMC(F_2[x=1], w)$ and $\WMC(F[x=0], w) = \WMC(F_1[x=0], w) \cdot \WMC(F_2[x=0], w)$.
    Let $W_t = \WMC(F_1[x=1], w)$ and $W_f = \WMC(F_1[x=0], w)$, which are computed in constant time. Then,
    $$ \WMC(F, w) = w(x) \cdot W_t \cdot \WMC(F_2[x=1], w) + w(\nl{x}) \cdot W_f \cdot \WMC(F_2[x=0], w). $$
    The rule sets $F' = F_2$ and defines a new weight function $w'$ where $w'(x) = w(x) \cdot W_t$, $w'(\nl{x}) = w(\nl{x}) \cdot W_f$, and $w'(y) = w(y)$ for other literals $y$. The WMC of the new instance is:
    \begin{align*}
        \WMC(F', w') 
        &= w'(x) \cdot \WMC(F_2[x=1], w) + w'(\nl{x}) \cdot \WMC(F_2[x=0], w)\\
        &= (w(x) \cdot W_t) \cdot \WMC(F_2[x=1], w) + (w(\nl{x}) \cdot W_f) \cdot \WMC(F_2[x=0], w).
    \end{align*}
    Thus, we have $\WMC(F, w) = \WMC(F', w')$. Since $W$ is unchanged ($W'=W$), the rule is correct.
\end{proof}

\begin{definition}[Reduced formulas]
    A formula $\formula$ is called  \textit{reduced} if none of the above reduction rules is applicable.
    We use $\reduced{\formula}$ to denote the reduced formula obtained by exhaustively applying the above reduction rules on $\formula$.
\end{definition}

\begin{lemma}\label{lem:reduction-safe}
    For any formula $\formula$, applying any reduction rule on $\formula$ will not increase the number of clauses or the length of any clause.
    Moreover, it takes polynomial time to transform $\formula$ into a reduced formula $\reduced{\formula}$ where no reduction rules can be applied.
\end{lemma}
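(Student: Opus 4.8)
The plan is to prove the two assertions separately, each by a case analysis over the nine rules, tracking only the syntactic bookkeeping that each rule performs on the clause set (the correctness of the rules, already established in Lemma~\ref{lem:reduction-correct}, is irrelevant here).

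\emph{No clause is created or lengthened.} I would group the rules by the operation they perform. Rules~\ref{rrule:duplicated}, \ref{rrule:taut}, \ref{rrule:sub}, \ref{rrule:1cls}, and \ref{rrule:2cls-1comp} only delete literals from clauses or delete entire clauses, so they cannot increase $m(\formula)$ nor the length of any clause. Rule~\ref{rrule:0var} removes a variable occurring in no clause and thus leaves every clause untouched. Rules~\ref{rrule:disjoint} and \ref{rrule:cut} replace $\formula$ by the sub-formula $F_2\subseteq\formula$ (the clauses of $F_1$ are discarded, and in Rule~\ref{rrule:cut} the only further change is to the weight function), so the surviving clauses are unchanged while $m(\formula)$ strictly drops. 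The only rule needing care is Rule~\ref{rrule:2cls-2comp}: its substitution step replaces each occurrence of $\literal_a$ (resp.\ $\nl{\literal_a}$) by $\nl{\literal_b}$ (resp.\ $\literal_b$), a one-for-one replacement that never lengthens a clause. Moreover, since Rule~\ref{rrule:taut} is assumed inapplicable beforehand, no clause contains both $\literal_a$ and $\nl{\literal_a}$; and whenever the substitution produces a repeated literal or a complementary pair, the clause only shrinks (clauses being sets) or becomes a tautology that is then deleted. Hence every rule preserves the first assertion and, as a byproduct, never increases the maximum clause length, so a $2$-CNF (resp.\ $3$-CNF) stays a $2$-CNF (resp.\ $3$-CNF) throughout.

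\emph{Polynomial-time termination.} I would introduce the measure $\Phi(\formula) = n(\formula) + m(\formula) + L(\formula)$, where $L(\formula) := \sum_{C\in\formula}|C|$ is the total number of literal occurrences. By the first part no rule increases $m$ or $L$, and no rule introduces a new variable, so no rule increases $n$; thus $\Phi$ is non-increasing. It then remains to check that each rule strictly decreases some coordinate, hence $\Phi$ by at least $1$: Rules~\ref{rrule:duplicated} and \ref{rrule:2cls-1comp} delete a literal ($L$ drops); Rules~\ref{rrule:taut}, \ref{rrule:sub}, \ref{rrule:1cls}, \ref{rrule:disjoint}, and \ref{rrule:cut} delete a clause ($m$ drops, using that $F_1$ is non-empty for the last two); Rule~\ref{rrule:0var} deletes a variable and Rule~\ref{rrule:2cls-2comp} removes $\var(\literal_a)$ ($n$ drops). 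Since $\Phi$ is a non-negative integer bounded initially by a polynomial in the input size, at most $\Phi(\formula)=O(n(\formula)+m(\formula)+L(\formula))$ applications can occur. For the per-round cost I would note that detecting an applicable rule is a scan over clauses or clause pairs for Rules~\ref{rrule:duplicated}--\ref{rrule:2cls-2comp}, and for Rules~\ref{rrule:disjoint} and \ref{rrule:cut} reduces to computing connected components (resp.\ articulation points) of the primal graph $G(\formula)$ and checking whether some part has at most $10$ variables, all polynomial; executing a rule, including the brute-force calls $\algbf(F_1,\wfunc)$ on $n(F_1)\le 10$ variables, costs $\bigOs(2^{10})=O(1)$. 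Combining a polynomial number of applications with polynomial per-round cost yields the claimed bound.

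The main obstacle is Rule~\ref{rrule:2cls-2comp}, as it is the only rule that rewrites clauses rather than merely shrinking them; the length-preservation argument there must hinge on the one-for-one nature of the substitution together with the prior inapplicability of Rules~\ref{rrule:duplicated} and \ref{rrule:taut}, whereas the remaining eight rules reduce to routine bookkeeping.
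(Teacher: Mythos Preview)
Your proposal is correct and follows essentially the same approach as the paper: both argue by case analysis that no rule increases $n$, $m$, or the length of any clause, then use the identical potential $\Phi(\formula)=n(\formula)+m(\formula)+L(\formula)$ to bound the number of rule applications. Your treatment is slightly more careful on two points the paper glosses over---the one-for-one substitution in R-Rule~\ref{rrule:2cls-2comp} and the detection of R-Rules~\ref{rrule:disjoint}/\ref{rrule:cut} via connected components and articulation points of $G(\formula)$---but the overall structure is the same.
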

\begin{proof}
    First, we verify that no reduction rule increases the number of variables $n(\formula)$, the number of clauses $m(\formula)$, or the total length of all clauses $L(\formula)$. 
    It can be verified that none of the nine rules introduce new variables. 
    R-Rule~\ref{rrule:duplicated}, R-Rule~\ref{rrule:2cls-1comp}, and R-Rule~\ref{rrule:2cls-2comp} only remove or replace literals; thus, they do not increase $m(\formula)$ or $L(\formula)$. The other rules (R-Rule~\ref{rrule:taut}, R-Rule~\ref{rrule:sub}, R-Rule~\ref{rrule:1cls}, R-Rule~\ref{rrule:0var}, R-Rule~\ref{rrule:disjoint}, and R-Rule~\ref{rrule:cut}) simply remove clauses or variables; so they do not increase these parameters either. Thus, applying any reduction rule does not increase the number of clauses or the length of any clause.

    To prove that the reduction process terminates in polynomial time, we define a potential function $\Phi(\formula) = n(\formula) + m(\formula) + L(\formula)$. The initial value of $\Phi(\formula)$ is polynomial in the input size.
    We show that each application of any rule decreases $\Phi(\formula)$ by at least 1.
    
    Previous analysis shows that the application of any rule would not increase $n(\formula)$, $m(\formula)$, or $L(\formula)$. 
    Thus, it is sufficient to show that each application strictly decreases at least one of $n(\formula)$, $m(\formula)$, and $L(\formula)$: 
    each application of Rule~\ref{rrule:duplicated} and R-Rule~\ref{rrule:2cls-1comp} removes at least one literal, which decreases $L(\formula)$ by at least 1; 
    applying one of R-Rule~\ref{rrule:taut}, R-Rule~\ref{rrule:sub}, and R-Rule~\ref{rrule:1cls} removes at least one clause, which decreases $m(\formula)$ by at least 1;
    each application of R-Rule~\ref{rrule:0var} and R-Rule~\ref{rrule:2cls-2comp} removes at least one variable, which decreases $n(\formula)$ by at least 1;
    R-Rule~\ref{rrule:disjoint} and R-Rule~\ref{rrule:cut} remove a non-empty sub-formula, which involves removing at least one variable and one clause and decreases both $n(\formula)$ and $m(\formula)$ by at least 1.

    Since each application of a reduction rule decreases $\Phi(\formula)$ by at least 1 and $\Phi(\formula) \ge 0$, the reduction process must terminate. The total number of applications of reduction rules is bounded by the initial value of $\Phi(\formula)$, which is polynomial in the input size. Furthermore, each rule can be checked and applied in polynomial time. 
    Therefore, transforming $\formula$ into $\reduced{\formula}$ can be done in polynomial time.
\end{proof}

\begin{lemma}\label{lem:reduced_prop_basic}
    In a reduced formula $\formula$, it holds that (1) all clauses are $2^+$-clauses; (2) all $2$-clauses only contains $2^+$-variables.
\end{lemma}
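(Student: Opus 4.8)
The plan is to prove both parts by contradiction: assuming a reduced formula $\formula$ violates (1) or (2), I will exhibit a reduction rule that is still applicable, contradicting the definition of a reduced formula. For (1), a clause that is not a $2^+$-clause has length at most $1$, i.e.\ length $0$ or $1$. A $1$-clause $(\literal)$ immediately triggers R-Rule~\ref{rrule:1cls}, so it cannot survive in a reduced formula. An empty ($0$-length) clause either coexists with some other clause $D$, in which case $\emptyset\subseteq D$ makes R-Rule~\ref{rrule:sub} applicable, or it is the sole clause, in which case the instance is trivially unsatisfiable and is a constant-size base case handled directly by $\algbf$. Hence every clause of a reduced formula has length at least $2$.

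For (2), suppose toward a contradiction that some $2$-clause $C=\literal_a\literal_b$ contains a $1$-variable. First observe that $\var(\literal_a)\neq\var(\literal_b)$, since otherwise $C$ is either a duplicate (reduced by R-Rule~\ref{rrule:duplicated} to a $1$-clause) or a tautology (removed by R-Rule~\ref{rrule:taut}). Without loss of generality let $x=\var(\literal_b)$ have degree $1$, so that $x$ occurs only in $C$, and set $y=\var(\literal_a)$. The key device is the single-clause sub-formula $F_1=\{C\}$, which satisfies $n(F_1)=2\le 10$, and I split on $\deg(y)$. If $\deg(y)\ge 2$, then $y$ occurs in some clause other than $C$, so $F_2=\formula\setminus\{C\}$ is non-empty and, because $x$ occurs only in $C$, we have $\var(F_1)\cap\var(F_2)=\{y\}$; thus R-Rule~\ref{rrule:cut} applies with cut variable $y$. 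If $\deg(y)=1$, then both endpoints of $C$ occur only in $C$, making $\{C\}$ an isolated component; provided $\formula\neq\{C\}$, the partition $F_1=\{C\}$, $F_2=\formula\setminus\{C\}$ has disjoint variable sets and R-Rule~\ref{rrule:disjoint} applies. In every case reducedness is contradicted, so a $2$-clause contains no $1$-variable; since any variable occurring in a clause has degree at least $1$, ruling out degree exactly $1$ shows every variable of a $2$-clause is a $2^+$-variable.

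The routine parts are verifying that R-Rules~\ref{rrule:duplicated} and~\ref{rrule:taut} force the two variables of a $2$-clause to be distinct, and checking the $n(F_1)\le 10$ and non-emptiness side conditions of R-Rules~\ref{rrule:disjoint} and~\ref{rrule:cut}. The main subtlety I expect is the bookkeeping of the degenerate cases in which the offending clause constitutes essentially the entire formula---a lone empty clause in (1) and $\formula=\{C\}$ in (2). These escape the partition-based rules precisely because no valid non-empty split exists, but each is a constant-size instance solved directly by $\algbf$ and therefore lies outside the reduced formulas to which the branching analysis is later applied; I would state this base-case convention explicitly to keep the claim clean.
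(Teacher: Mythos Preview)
Your approach is essentially the same as the paper's: for (1) invoke R-Rule~\ref{rrule:1cls}, and for (2) set $F_1=\{C\}$ and apply R-Rule~\ref{rrule:disjoint} or R-Rule~\ref{rrule:cut} depending on whether the second variable of $C$ also has degree~$1$. You are more thorough than the paper about the degenerate corner cases (a lone empty clause, or $\formula=\{C\}$); the paper's proof simply omits these, tacitly relying on the algorithm's explicit base-case checks and on the fact that such constant-size instances cause no difficulty in the subsequent analysis.
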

\begin{proof}
    By R-Rule~\ref{rrule:1cls}, there is no $1$-clause in $\formula$, and so (1) holds.
    By R-Rule~\ref{rrule:0var}, there is no $0$-variable in $\formula$.
    Let $C$ be a $2$-clause in $\formula$.
    We partition formula $\formula$ into $F_1=C$ and $F_2=\formula\setminus F_1$.
    If $C$ contains two $1$-variables, then R-Rule~\ref{rrule:disjoint} can be applied since $\var(F_1)\cap \var(F_2) = \emptyset$.
    If $C$ contains only one $1$-variable (assume that $x$ the $2$-variable in $C$), then R-Rule~\ref{rrule:cut} is applicable since $\var(F_1)\cap \var(F_2) = \{x\}$. Thus, $C$ would not contain $1$-variables. This completes the proof.
\end{proof}

\begin{lemma}\label{lem:reduced_prop_neighbor}
    In a reduced formula $\formula$, if there is a $2$-clause $\literal_{a}\literal_{b}$, there is no other clause containing $\literal_{a}\literal_{b}$, $\nl{\literal_{a}}\literal_{b}$, or ${\literal_{a}}\nl{\literal_{b}}$, and there is no $2$-clause $\nl{\literal_{a}}\nl{\literal_{b}}$.
\end{lemma}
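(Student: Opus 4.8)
The plan is to argue by contradiction from reducedness: for each of the four forbidden configurations I would assume it occurs in $\formula$ and then exhibit a reduction rule that is still applicable, contradicting the definition of a reduced formula. Because the four assertions are logically independent, I would dispatch them one at a time, in each case naming the exact rule that fires on the relevant pair of clauses.

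First I would treat the claim that no other clause contains $\literal_a\literal_b$. Suppose a clause $D$ distinct from $C := \literal_a\literal_b$ contained both $\literal_a$ and $\literal_b$. Since a formula is a set of clauses and clauses are sets of literals, $D \neq C$ forces $C \subsetneq D$, so in particular $C \subseteq D$ and R-Rule~\ref{rrule:sub} would delete $D$, a contradiction.

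Next, for the two mixed configurations, I would invoke R-Rule~\ref{rrule:2cls-1comp}. If some clause had the form $\literal_a\nl{\literal_b}C'$, then the rule fires on the pair $(\literal_a\literal_b,\ \literal_a\nl{\literal_b}C')$ and removes $\nl{\literal_b}$, contradicting reducedness. The configuration $\nl{\literal_a}\literal_b$ is the same situation after exchanging the names of $\literal_a$ and $\literal_b$, which is legitimate because the $2$-clause $\literal_a\literal_b$ is an unordered set of literals. Finally, if $\formula$ also contained the $2$-clause $\nl{\literal_a}\nl{\literal_b}$, then R-Rule~\ref{rrule:2cls-2comp} would apply to $(\literal_a\literal_b,\ \nl{\literal_a}\nl{\literal_b})$, once more contradicting reducedness.

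The hard part is not any single step but getting the pattern-matching conventions exactly right: I must read the hypothesis of R-Rule~\ref{rrule:2cls-1comp} as an unordered match against the $2$-clause, so that both orientations $\literal_a\nl{\literal_b}$ and $\nl{\literal_a}\literal_b$ are genuinely covered by one rule; and in the subsumption step I must confirm that $D$ is really a different clause (which holds since a set-valued formula cannot contain two identical clauses) so that the containment is proper and R-Rule~\ref{rrule:sub} genuinely applies. Everything else is a routine check that the stated clause patterns match the rule hypotheses.
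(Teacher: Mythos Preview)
Your proof is correct and follows the same case-by-case contradiction argument as the paper. One minor note: for the first case (another clause containing $\literal_a\literal_b$) you correctly invoke R-Rule~\ref{rrule:sub}, whereas the paper's own proof cites R-Rule~\ref{rrule:taut} here, which appears to be a slip since no tautology is present and subsumption is indeed the applicable rule.
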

\begin{proof}
    Suppose, for the sake of contradiction, that there is a $2$-clause $\literal_{a}\literal_{b}$ and there exists a clause $C$ which contradicts what the lemma says. We consider case by case.
    (1) If $C$ contains $\literal_{a}\literal_{b}$, R-Rule~\ref{rrule:taut} is applicable.
    (2) If $C$ contains $\nl{\literal_{a}}\literal_{b}$ (or ${\literal_{a}}\nl{\literal_{b}}$), R-Rule~\ref{rrule:2cls-1comp} can be applied.
    (3) If $C = \nl{\literal_{a}}\nl{\literal_{b}}$, R-Rule~\ref{rrule:2cls-2comp} is applicable.
    In conclusion, we can apply reduction rules in any case, a contradiction. Thus, the lemma holds.
\end{proof}

\section{The Algorithm for Weighted \#2-SAT}
In this section, we introduce our algorithm, called $\algtwocnf$, for WMC on $2$-CNF formulas.
The algorithm is presented in Algorithm~\ref{alg:2CNF}.
As we mentioned before, the algorithm comprises three main phases. Phase one (Line 1) is to apply reduction rules to get a reduced instance.
Phase two (Lines 4--8) is going to branch on $5^+$-variables and some special $4$-variables.
After phase two, the primal graph of the formula admits a small pathwidth. 
Phase three (Steps 10-11) is, based on a path decomposition, to use the algorithm $\algpripw$ in Theorem~\ref{thm:alg-primal-pw} to solve the problem directly.

\begin{algorithm}[htbp]
    \caption{$\algtwocnf\longins$}
    \label{alg:2CNF}
    \textbf{Input}: 2-CNF formula $\formula$, weight function $w$, and integer $\solW$.\\
    \textbf{Output}: The weighted model count $\solW\cdot \WMC(\formula, \wfunc)$.
    
    \begin{algorithmic}[1] 
        \State Apply reduction rules exhaustively to reduce $\formula$ and update $\wfunc$ and $\solW$ accordingly.
        \IfSingleLine{$\formula$ is empty}{
            \Return $\solW$.
        }
        \IfSingleLine{$\formula$ contains empty clause}{
            \Return $0$.
        }
        \If{$\deg(\formula)\geq 5$}
            \State Select a variable $x$ with $\deg(x)=\deg(\formula)$;
            \State Branch on $x$.\label{line:alg2-branch-1} %
        \ElsIf{$\exists$ $x$ such that $\deg(x)=4$ and $|N_4(x)|\geq 3$}
            \State Branch on $x$.\label{line:alg2-branch-2} %
        \Else 
            \State $P\gets$ path decomposition of $G(\formula)$ via Theorem~\ref{thm:pw-lowd}. \label{line:alg2-pd}
            \State $\solW_{pw}\gets \algpripw(\formula, \wfunc, P)$; \label{line:alg2-call-pw} %
            \State \Return $\solW \cdot \solW_{pw}$.
        \EndIf
    \end{algorithmic}
    
\end{algorithm}

\subsection{The Analysis}
Although the algorithm itself is simple, its running time analysis is technically involved.
We first prove some properties of a reduced 2-CNF, which will be used in our analysis.

\begin{lemma}\label{lem:2cnf-basic-prop}
    In a reduced $2$-CNF formula $\formula$, it holds that (1) all clauses are $2$-clauses; (2) all variables are $2^+$-variables; (3) $n(\formula)\leq m(\formula)$.
\end{lemma}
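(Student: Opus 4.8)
The plan is to derive all three claims from the properties already established for general reduced formulas (Lemma~\ref{lem:reduced_prop_basic}) together with the defining constraint that $\formula$ is $2$-CNF, i.e.\ every clause has length at most $2$.

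For part (1), I would argue as follows. Lemma~\ref{lem:reduced_prop_basic}(1) already tells us that every clause in a reduced formula is a $2^+$-clause, so each clause has length at least $2$. Combined with the $2$-CNF assumption, which forces each clause to have length at most $2$, every clause has length exactly $2$, hence is a $2$-clause. This is immediate and needs no further work.

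For part (2), I would first invoke R-Rule~\ref{rrule:0var} to conclude that a reduced formula contains no $0$-variable, so every variable has degree at least $1$ and therefore occurs in at least one clause. It then remains to rule out $1$-variables. Suppose toward a contradiction that $x$ is a $1$-variable; since $x$ occurs in some clause and, by part (1), every clause is a $2$-clause, the variable $x$ lies in a $2$-clause. But Lemma~\ref{lem:reduced_prop_basic}(2) states that every $2$-clause in a reduced formula contains only $2^+$-variables, contradicting that $x$ is a $1$-variable. Hence every variable is a $2^+$-variable.

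For part (3), the idea is a double-counting (handshake-style) argument on the literal occurrences of $\formula$. Counting incidences by clauses, each of the $m(\formula)$ clauses is a $2$-clause by part (1), contributing exactly $2$ occurrences, so the total number of literal occurrences equals $2\,m(\formula)$. Counting the same incidences by variables, the total equals $\sum_{x\in\var(\formula)}\deg(x)$, and by part (2) each summand is at least $2$, giving a lower bound of $2\,n(\formula)$. Equating the two counts yields $2\,n(\formula)\le \sum_{x}\deg(x)=2\,m(\formula)$, and dividing by $2$ gives $n(\formula)\le m(\formula)$. I expect part (3) to be the only step with any content, and even there the only obstacle is recognizing that both the clause-side and variable-side counts enumerate the same set of (literal, clause) incidences; once that identity is in place, parts (1) and (2) plug in directly to finish the bound.
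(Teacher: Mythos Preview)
Your proposal is correct and follows essentially the same approach as the paper: parts (1) and (2) are reduced to Lemma~\ref{lem:reduced_prop_basic} (the paper simply says they ``immediately follow'' from it, while you spell out the easy details), and part (3) is the same double-counting of literal occurrences that the paper carries out via the quantity $L(\formula)$.
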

\begin{proof}
    Since $\formula$ is a $2$-CNF formula, (1) and (2) immediately follow from Lemma~\ref{lem:reduced_prop_basic}.
    Let $L(\formula)$ be the sum of the lengths of clauses in $\formula$.
    Since all clauses are $2$-clauses, we have $L(\formula)=2m(\formula)$.
    On the other hand, since all variables are $2^+$-variables, we have $L(\formula)\geq 2n(\formula)$.
    Thus, $m(\formula)\geq n(\formula)$ holds.
    This completes the proof.
\end{proof}

\begin{lemma}\label{lem:2cnf-N2-sep}
    In a reduced $2$-CNF formula $\formula$, for a variable $x$, any clause contains at most one variable in $N_2(x)$.
\end{lemma}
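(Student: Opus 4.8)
The plan is to argue by contradiction. Suppose some clause $C$ contains two distinct variables $y, z \in N_2(x)$. By Lemma~\ref{lem:2cnf-basic-prop}, every clause in a reduced $2$-CNF formula is a $2$-clause, so $C$ is a $2$-clause with $\var(C) = \{y, z\}$. The goal is to show that this local configuration is always reducible, contradicting the assumption that $\formula$ is reduced. Note that since $y, z \in N_2(x)$ are neighbors of $x$, we have $y, z \neq x$.

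The first step is to pin down the exact local structure. Since $y, z \in N_2(x)$, each of $y$ and $z$ has degree exactly $2$ and is adjacent to $x$. One of the two occurrences of $y$ is its occurrence in $C$ (paired with $z$); its second occurrence lies in some clause $C_y \neq C$, which by Lemma~\ref{lem:2cnf-basic-prop} is a $2$-clause of the form $\{$a literal of $y$, a literal of $u\}$ for some variable $u$. Thus the two neighbors of $y$ are $z$ (from $C$) and $u$ (from $C_y$); since $y$ is adjacent to $x$ and $z \neq x$, we must have $u = x$, so $C_y$ contains $x$. The symmetric argument shows that the second occurrence of $z$ lies in a $2$-clause $C_z$ that also contains $x$, and a direct check shows $C, C_y, C_z$ are pairwise distinct. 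Consequently, all occurrences of $y$ and $z$ are confined to $\{C, C_y, C_z\}$, and $\var(\{C, C_y, C_z\}) = \{x, y, z\}$. Now set $F_1 = \{C, C_y, C_z\}$ and $F_2 = \formula \setminus F_1$; since $y$ and $z$ occur only inside $F_1$, the only variable $F_1$ and $F_2$ can share is $x$, so $\var(F_1) \cap \var(F_2) \subseteq \{x\}$, while $n(F_1) = 3 \le 10$. I then split into cases: if $x$ does not occur in $F_2$, then $\var(F_1) \cap \var(F_2) = \emptyset$ and R-Rule~\ref{rrule:disjoint} applies; if $x$ does occur in $F_2$, then $\var(F_1) \cap \var(F_2) = \{x\}$ and R-Rule~\ref{rrule:cut} applies. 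Either way $\formula$ admits a reduction rule, the desired contradiction.

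The main obstacle is the structural bookkeeping in the middle step: correctly combining the degree-$2$ condition (which confines $y$ and $z$ to exactly two clauses each) with the adjacency-to-$x$ condition (which forces both auxiliary clauses $C_y$ and $C_z$ to contain $x$), since this is precisely what guarantees $\var(F_1) \cap \var(F_2) \subseteq \{x\}$ and hence the applicability of R-Rule~\ref{rrule:disjoint}/R-Rule~\ref{rrule:cut}. The one delicate boundary case is $F_2 = \emptyset$, i.e. $F_1 = \formula$, which forces $n(\formula) = 3$; here no partition into two non-empty parts exists, so this constant-size instance falls outside the branching regime where the lemma is invoked and is instead handled directly in the third phase of the algorithm.
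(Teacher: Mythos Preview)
Your argument is correct and follows essentially the same route as the paper: isolate the three clauses $\{C, C_y, C_z\}$ on $\{x, y, z\}$, set $F_1$ to be this triple, and invoke R-Rule~\ref{rrule:cut} (the paper simply asserts $\var(F_1)\cap\var(F_2)=\{x\}$ and applies R-Rule~\ref{rrule:cut} directly, without your separate R-Rule~\ref{rrule:disjoint} sub-case). Your attention to the boundary case $F_2=\emptyset$ is well-placed and actually goes beyond the paper's own proof---indeed the three-clause formula $\{(y\lor z),(x\lor y),(x\lor z)\}$ is reduced yet violates the lemma as literally stated, so the lemma tacitly assumes the formula is not this degenerate instance, which is always the case when the lemma is invoked in the branching analysis (where $\deg(\formula)\geq 4$).
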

\begin{proof}
    Suppose for contradiction that there is a $2$-clause $C_{yz}$ which contains two variables $y$ and $z$ in $N_2(x)$ for some variable $x$.
    Let $D_{xy}$ (resp.,$D_{xz}$) be the $2$-clause containing variables $x$ and $y$ (resp., variables $x$ and $z$).
    Since both $y$ and $z$ are $2$-variables, formula $\formula$ can be partitioned into $F_1$ and $F_2$, where $F_1 = C_{xy}\wedge C_{xz}\wedge C_{yz}$ and $F_2 = \formula\setminus F_1$, such that $\var(F_1)\cap \var(F_2) = \{x\}$ and $n(F_1)=3$.
    Thus, R-Rule~\ref{rrule:cut} is applicable, which contradicts that $\formula$ is reduced.
\end{proof}

\begin{lemma}\label{lem:2cnf-assign-basic}
    Let $\formula$ be a reduced $2$-CNF formula and $x$ be a variable in $\formula$.
    All clauses containing $x$ would not appear in $\reduced{\formula[x=0]}$ and $\reduced{\formula[x=1]}$.
\end{lemma}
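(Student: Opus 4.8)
The plan is to combine the structural fact that, by Lemma~\ref{lem:2cnf-basic-prop}, every clause of the reduced $2$-CNF formula $\formula$ is a $2$-clause, with the precise effect of the assignment operation $\formula[\literal=1]$ and the elimination rule R-Rule~\ref{rrule:1cls}. Since the two assignments $\formula[x=0]$ and $\formula[x=1]$ are completely symmetric (swapping the roles of the literals $x$ and $\nl{x}$), I would argue only the case $\formula[x=1]$ and remark that $\formula[x=0]$ follows identically. Fix any clause $C$ of $\formula$ with $x\in\var(C)$. Because $\formula$ is reduced, R-Rule~\ref{rrule:duplicated} and R-Rule~\ref{rrule:taut} ensure the two literals of the $2$-clause $C$ lie on distinct variables, so $C$ is either $x\literal$ or $\nl{x}\literal$ for some literal $\literal$ on a variable other than $x$.

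First I would dispose of the easy case $C=x\literal$: by the definition of $\formula[x=1]$, every clause containing the literal $x$ is deleted, so $C$ never enters $\formula[x=1]$ and hence cannot appear in $\reduced{\formula[x=1]}$. The interesting case is $C=\nl{x}\literal$. Here the operation $\formula[x=1]$ strips the literal $\nl{x}$ from $C$; since $C$ had length exactly $2$ with literals on distinct variables, the surviving object is the genuine $1$-clause $(\literal)$. The heart of the argument is then to show that this leftover $1$-clause does not survive reduction: during the computation of $\reduced{\formula[x=1]}$, R-Rule~\ref{rrule:1cls} is applicable to $(\literal)$ and removes it. Equivalently, by Lemma~\ref{lem:reduced_prop_basic}(1) the reduced formula $\reduced{\formula[x=1]}$ contains no $1$-clause at all, so $(\literal)$ cannot be present. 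Finally I would invoke Lemma~\ref{lem:reduction-safe}, which guarantees that reduction never creates new clauses, ruling out the possibility that a fresh copy of an $x$-containing clause reappears during the reduction. Combining the two cases, every clause of $\formula$ containing $x$ is either deleted outright or collapsed to a $1$-clause that is subsequently eliminated, so none of them appears in $\reduced{\formula[x=1]}$, and the same holds verbatim for $\reduced{\formula[x=0]}$.

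The only delicate point I anticipate is making the $\nl{x}$-case airtight: one must verify that deleting $\nl{x}$ leaves a true length-$1$ clause rather than, say, an empty clause or an unchanged one—this is exactly where the reducedness hypothesis enters, via the all-$2$-clauses property of Lemma~\ref{lem:2cnf-basic-prop} together with R-Rule~\ref{rrule:duplicated} and R-Rule~\ref{rrule:taut}—and that this $1$-clause is genuinely consumed by R-Rule~\ref{rrule:1cls} rather than merely relabeled. Everything else follows immediately from the definition of the assignment operation and the fact (Lemma~\ref{lem:reduction-safe}) that reduction is clause-non-increasing.
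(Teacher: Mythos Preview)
Your proposal is correct and follows essentially the same approach as the paper: use Lemma~\ref{lem:2cnf-basic-prop} to know every clause is a $2$-clause, treat the $x=1$ case (the other being symmetric), observe that clauses containing literal $x$ are removed outright while clauses containing $\nl{x}$ become $1$-clauses that R-Rule~\ref{rrule:1cls} eliminates. Your extra care about R-Rules~\ref{rrule:duplicated}/\ref{rrule:taut} and Lemma~\ref{lem:reduction-safe} is more detail than the paper gives, but the argument is the same.
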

\begin{proof}
    By Lemma~\ref{lem:2cnf-basic-prop}, all clauses in $\formula$ are $2$-clauses.
    Consider the case that we assign $x=1$, and the case for $x=0$ is analogous.
    All clauses that contain literal $x$ are satisfied and removed.
    Furthermore, all clauses that contain literal $\nl{x}$ become $1$-clauses, and thus R-Rule~\ref{rrule:1cls} would be applied to remove them.
    Thus, all clauses containing $x$ would not appear in $\reduced{\formula[x=0]}$ and $\reduced{\formula[x=1]}$.
\end{proof}

To analyze the running time bound, we focus on phase two and phase three since phase one will not exponentially increase the running time. For phase two, we mainly use Lemma~\ref{lem:vec-dominate} to get the worst branching vector. Thus, we need to analyze lower bounds for the decrease of $m(\formula)$ in a branching operation, which is formally presented in Lemma~\ref{lem:2cnf-general-bound} below.

\begin{lemma}\label{lem:2cnf-general-bound}
    Let $\formula$ be a reduced 2-CNF formula of degree $d$ and $x$ be a $d$-variable in $\formula$.
    Let $\Delta_{t}=m(\formula)-m(\reduced{\formula[x=1]})$ and $\Delta_{f}=m(\formula)-m(\reduced{\formula[x=0]})$.
    It holds that
    \begin{enumerate}
        \item $\Delta_{t}, \Delta_{f}\geq d + |N_2(x)|$;
        \item $\Delta_{t}+\Delta_{f}\geq 2d + \card{N_2(x)} + \left\lceil \frac{1}{2}\sum_{2\leq i\leq d}(i-1)|{N_i(x)}| \right\rceil + 1$ if $d\leq 7$.
    \end{enumerate}
\end{lemma}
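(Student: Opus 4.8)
The plan is to prove both bounds by tracking which clauses of $\formula$ are destroyed when $x$ is assigned and the result re-reduced, organizing the count around the neighborhood of $x$. I first record the structure I will lean on. By Lemma~\ref{lem:2cnf-basic-prop} every clause is a $2$-clause, and by Lemma~\ref{lem:reduced_prop_neighbor} two adjacent variables share exactly one clause; hence $x$ lies in exactly $d$ clauses meeting $d$ distinct neighbors, so $\card{N(x)}=d$. I call the $d$ clauses containing $x$ the \emph{level-$0$} clauses and the clauses meeting a neighbor of $x$ but not $x$ the \emph{level-$1$} clauses. Each level-$0$ clause vanishes in both branches (when its $x$-literal is satisfied it is deleted; when that literal is falsified the clause becomes a unit clause removed by R-Rule~\ref{rrule:1cls}), contributing $d$ to each of $\Delta_t,\Delta_f$.

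For part~(1), I would additionally show that the unique second clause $D_y$ of every $2$-neighbor $y\in N_2(x)$ is destroyed in \emph{both} branches. In the branch satisfying the $x$-literal of the clause $C_{xy}$ shared by $x$ and $y$, that clause disappears and $y$ is left with degree $1$ inside $D_y$, so R-Rule~\ref{rrule:cut} (or R-Rule~\ref{rrule:disjoint}) excises $D_y$; in the other branch $C_{xy}$ collapses to a unit clause forcing $\literal_y=1$ through R-Rule~\ref{rrule:1cls}, again deleting $D_y$. These $D_y$ are distinct from each other and from the level-$0$ clauses, since by Lemma~\ref{lem:2cnf-N2-sep} no clause contains two members of $N_2(x)$. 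This yields $\Delta_t,\Delta_f\ge d+\card{N_2(x)}$.

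For part~(2), I would sum the removals over the two branches. Let $S=\sum_{2\le i\le d}(i-1)\card{N_i(x)}$, which counts the incidences between neighbors of $x$ and level-$1$ clauses; since a length-$2$ clause meets at most two neighbors of $x$, I classify level-$1$ clauses by whether they meet $N_2(x)$ and by how many neighbors they contain, setting $A$ to be the number meeting a $2$-neighbor (which equals $\card{N_2(x)}$, distinct second clauses by Lemma~\ref{lem:2cnf-N2-sep}) and $B$ the rest. A clause meeting a $2$-neighbor is destroyed in both branches by part~(1), while every other level-$1$ clause is destroyed in at least the branch forcing the literal of one of its high-degree neighbors, so the summed level-$1$ removals are at least $2A+B$. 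A short incidence count then gives the identity $2A+B=\card{N_2(x)}+\tfrac12 S+\tfrac12 e$, where $e$ is the number of level-$1$ clauses meeting exactly one neighbor of $x$. Together with the $2d$ level-$0$ removals this produces $\Delta_t+\Delta_f\ge 2d+\card{N_2(x)}+\tfrac12 S+\tfrac12 e$.

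The crux — and the source of the additive $+1$ — is to prove $e\ge 2$, with $e\ge 3$ when $S$ is odd; plugging this in makes $\tfrac12 S+\tfrac12 e\ge\lceil S/2\rceil+1$ and closes the bound. I would derive $e\ge 2$ from reducedness: $e$ is exactly the number of clauses joining the region $R=\{x\}\cup N(x)$ to the rest of the formula, and $\card{R}=d+1\le 10$ under the hypothesis $d\le 7$. If $e=0$ then $R$ is a separate sub-formula and R-Rule~\ref{rrule:disjoint} applies; if $e=1$ then $R$ attaches to the remainder through a single variable and R-Rule~\ref{rrule:cut} applies; either contradicts that $\formula$ is reduced (here using that $R$ is a proper sub-formula, so its complement is non-empty). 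Finally $S$ and $e$ have the same parity, since the level-$1$ clauses meeting two neighbors each contribute $2$ to $S$, so $S$ odd forces $e$ odd and hence $e\ge 3$. I expect this reduction-rule step to be the main obstacle, as it requires verifying that the disjoint and cut partitions genuinely satisfy the size and single-variable-boundary conditions of R-Rule~\ref{rrule:disjoint} and R-Rule~\ref{rrule:cut}.
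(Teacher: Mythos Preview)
Your proposal is correct and follows essentially the same approach as the paper's proof. Your level-$0$/level-$1$ clauses are the paper's $S_0$ and $S_1\cup S_2$; your $A$ and $B$ are exactly $S_1$ and $S_2$; your boundary count $e$ coincides with the paper's $q_2+q_{\geq 3}$; and your identity $2A+B=\card{N_2(x)}+\tfrac12 S+\tfrac12 e$ is the paper's Eq.~(3) after unpacking $\beta=S+e$. The parity observation $S\equiv e\pmod 2$ and the use of R-Rule~\ref{rrule:disjoint}/R-Rule~\ref{rrule:cut} on the region $\{x\}\cup N(x)$ to force $e\geq 2$ are also the paper's argument, so the obstacle you flag is handled exactly as you outline (one minor care point present in both arguments: when $e\leq 1$, the sub-formula $F_1$ picks up the at most one outside endpoint, so $n(F_1)\leq d+2\leq 9$, still within the rules' size bound).
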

\begin{proof}
    For clarity, we define the following notations:
    \begin{itemize}
        \item $S_0$: the set of clauses that contain variable $x$.
        \item $S_1$: the set of clauses that contain variable(s) in $N_{2}(x)$ but not contain variable $x$.
        \item $S_2$: the set of clauses that contain variable(s) in $N_{i}(x)$, where $i\geq 3$, but not contain any variable in $N_{2}(x)\cup \{x\}$.
    \end{itemize}

    \begin{figure}
        \centering
        \includegraphics[width=0.7\textwidth]{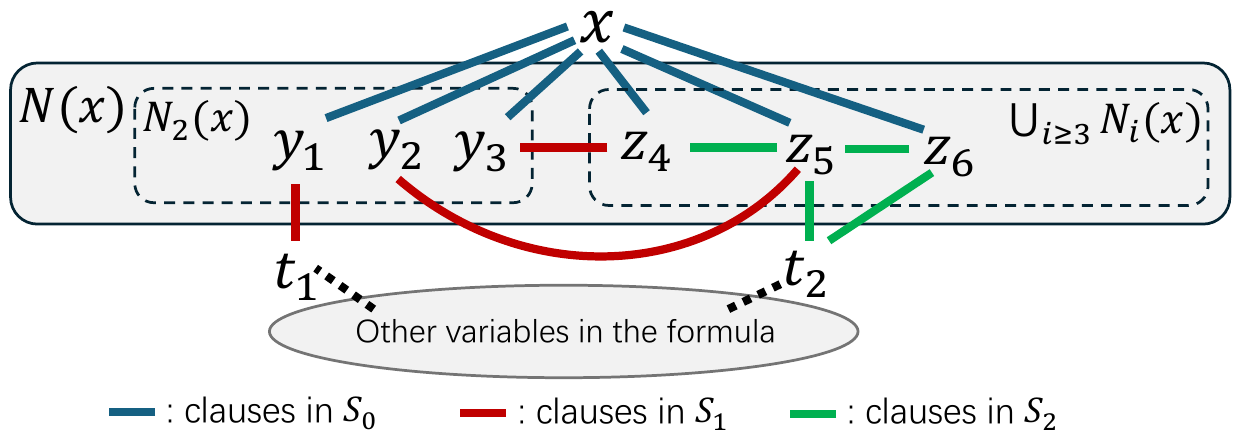}
        \caption{An illustrative example of the primal graph of a formula, highlighting the relationships among variables $x$, variables in $N_2(x)=\{y_1,y_2,y_3\}$ and $\bigcup_{i\geq 3}N_{i}(x)=\{z_4,z_5,z_6\}$, and variables $t_1, t_2$ (that co-occur in clauses with $N(x)$).
        By Lemma~\ref{lem:2cnf-basic-prop}, each edge in the graph corresponds to a unique clause in $\formula$.
        By Lemma~\ref{lem:2cnf-N2-sep}, there is no edge between variables in $N_2(x)$.
        }
        \label{fig:illustration}
    \end{figure}

    By definition, $S_0, S_1, S_2$ are pairwise disjoint.
    The primal graph of $\formula$ shown in Figure~\ref{fig:illustration} provides a useful perspective for understanding these notations and the subsequent proofs.

    We plan to analyze the bounds for $\Delta_{t}$, $\Delta_{f}$, and $\Delta_{t} + \Delta_{f}$ by considering whether a clause in $S_0\cup S_1\cup S_2$ would be removed after we assign a value to variable $x$ (\ie, whether a clause would appear in $\reduced{\formula[x=0]}$ or $\reduced{\formula[x=0]}$).

    By Lemma~\ref{lem:2cnf-assign-basic}, all clauses in $S_0$ would not appear in both $\reduced{\formula[x=0]}$ and $\reduced{\formula[x=1]}$.
    
    We claim that all clauses in $S_1$ would not appear in both $\reduced{\formula[x=0]}$ and $\reduced{\formula[x=1]}$.
    Let $C$ be a clause containing variable $x$ and a variable $y\in N_{2}(x)$.
    After we assign a value to $x$, if clause $C$ is satisfied, then it is removed and $y$ becomes a $1$-variable.
    Then, since $y$ (as a $1$-variable) is contained in a $2$-clause, R-Rule~\ref{rrule:cut} can be applied to remove the other clause containing $y$.
    Otherwise, if clause $C$ is not satisfied, the literal of $x$ is removed from $C$ and so $C$ becomes a $1$-clause. In this case, applying R-Rule~\ref{rrule:1cls} would assign a value to variable $y$. By Lemma~\ref{lem:2cnf-assign-basic}, all clauses containing $y$ would be removed.
    Thus, all clauses that contain some variable in $N_{2}(x)$ (which includes $S_1$) would not appear in both $\reduced{\formula[x=0]}$ and $\reduced{\formula[x=1]}$.

    Since all clauses in $S_0$ and $S_1$ would not appear in both $\reduced{\formula[x=0]}$ and $\reduced{\formula[x=1]}$, we have 
    $\Delta_{t}\geq \card{S_0}+\card{S_1}$ and $\Delta_{f}\geq \card{S_0}+\card{S_1}$.
    Since $\card{S_0}=d$ (by definition) and $\card{S_1}=\card{N_{2}(x)}$ (by Lemma~\ref{lem:2cnf-N2-sep}), 
    it holds that $\Delta_t, \Delta_f\geq d + \card{N_{2}(x)}$.

    Next, we consider the clauses in $S_2$.
    Let $D$ be a clause that contains variable $x$ and a variable $z\in N_{i}(x)$ where $i\geq 3$.
    Assigning either $x=0$ or $x=1$ would make $D$ become a $1$-clause that only contains variable $z$.
    When clause $D$ becomes such $1$-clause, R-Rule~\ref{rrule:1cls} would be applied to assign a value to $z$, and then remove all clauses containing $z$ according to Lemma~\ref{lem:2cnf-assign-basic}.
    Thus, for a clause in $S_2$, it would not appear in at least one of $\reduced{\formula[x=0]}$ and $\reduced{\formula[x=1]}$.
    Together with previous analyses on $S_0$ and $S_1$, we have 
    \begin{equation}\label{eq:bound-S2-left}
    \begin{aligned}
        \Delta_{t} + \Delta_{f}
        \geq 2\card{S_0} + 2\card{S_1} + \card{S_2}
        = 2d + 2\card{N_{2}(x)} + \card{S_2}.
    \end{aligned}
    \end{equation}

    To accurately characterize $S_2$ (and $\card{S_2}$), we need some additional notations.
    For a variable $y\in N(x)$, we define $P(y):=N(y)\cap N(x)$ and $Q(y):=N(y)\setminus (N(x)\cup \{x\})$.
    For example, in the example shown in Figure~\ref{fig:illustration}, we have $P(z_5)=\{y_2,z_4,z_6\}$ and $Q(z_5)=\{t_2\}$.
    For each $i\geq 2$, we define $p_{i}:=\sum_{y\in N_{i}(x)}\card{P(y)}$ and $q_{i}:=\sum_{y\in N_{i}(x)}\card{Q(y)}$.

    Since all clauses are $2$-clauses, $\card{N(y)}=\deg(y)$ holds from Lemma~\ref{lem:reduced_prop_neighbor}.
    With this, we have the following that would come in handy later:
    \begin{align*}
        p_{i} + q_{i} 
        &=  \sum_{y\in N_{i}(x)}\card{P(y)}+\card{Q(y)} &\\
        &=  \sum_{y\in N_{i}(x)}\card{N(y)\setminus\{x\}} & \text{by $N(y)=\{x\}\sqcup P(y)\sqcup Q(y)$}\\
        &= \sum_{y\in N_{i}(x)}(\deg(y)-1) & \text{by $\card{N(y)}=\deg(y)$}\\
        &= \sum_{\mathclap{y\in N_{i}(x)}}(i-1) 
        = (i-1)|{N_{i}(x)}|.
    \end{align*}
    We write $p_{\geq 3}:=\sum_{i\geq 3}p_{i}$ and $q_{\geq 3}:= \sum_{i\geq 3}q_{i}$ for brevity. 
    We claim that
        \begin{equation}\label{eq:S_2}
            \card{S_2}=\frac{1}{2}(p_{\geq 3} - p_2) + q_{\geq 3} \mbox{ and } (p_{\geq 3} - p_2)\bmod{2}=0.
        \end{equation}
    The reason is as follows. 
    By definition, $\card{S_2}$ is the number of clauses that contain variable(s) in $\bigcup_{i\geq 3}N_{i}(x)$, but not contain any variable in $N_{2}(x)\cup \{x\}$. We may simply write $N_{\geq 3}(x):= \bigcup_{i\geq 3}N_{i}(x)$.
    Consider the primal graph shown in Figure~\ref{fig:illustration}.
    $\card{S_2}$ is the number of ``green edges'' in Figure~\ref{fig:illustration}.
    We further partition $S_{3}$ into $S_{3}^{in}\sqcup S_{3}^{out}$, where $S_{3}^{in}$ is the set of clauses that contain two variables in $\bigcup_{i\geq 3}N_{i}(x)$ (e.g., $z_4z_5$ and $z_5z_6$ in the figure) and $S_{3}^{out}$ is the set of clauses that contain a variable in $N_{\geq 3}(x)$ (e.g., $z_5t_2$ and $z_6t_2$ in the figure).
    We have $q_{\geq 3} = \card{S_{3}^{out}}$ by definition.
    Since $p_2$ is the number of clauses that contain a variable in $N_2(x)$ and a variable in $N_{\geq 3}(x)$ (\ie, number of red edges between $N_2(x)$ and $N_{\geq 3}(x)$),
    We have $p_{\geq 3} = 2\card{S_{3}^{in}} + p_2$.
    This gives us $\card{S_{3}^{in}} = \frac{1}{2}(p_{\geq 3} - p_2)$ and $(p_{\geq 3} - p_2)\bmod{2}=0$.
    With $q_{\geq 3} = \card{S_{3}^{out}}$, we have $\card{S_2}=\card{S_{3}^{in}}+\card{S_{3}^{out}}=\frac{1}{2}(p_{\geq 3} - p_2)+q_{\geq 3}$.
    
    By putting \eqref{eq:S_2} into \eqref{eq:bound-S2-left} and writing $\beta:= (p_{\geq 3} - p_2) + 2(p_2 + q_2) + 2q_{\geq 3}$ for convenience, we have
    \begin{equation}\label{eq:t+f}
    \begin{aligned}
        \Delta_{t} + \Delta_{f} 
        &\geq 2d + 2(p_{2}+q_{2}) + \frac{1}{2}(p_{\geq 3} - p_2) + q_{\geq 3}\\
        &= 2d+(p_2+q_2)+\frac{1}{2}\beta
        = 2d+\card{N_2(x)}+\frac{1}{2}\beta.
    \end{aligned}
    \end{equation}
    Next, we claim that 
    \begin{equation}\label{eq:beta-bound}
    \begin{aligned}
            \frac{1}{2}\beta \geq \left\lceil\frac{1}{2}\sum_{2\leq i\leq d}(i-1)|N_i(x)|\right\rceil + 1.
    \end{aligned}
    \end{equation}
    Note that with Eq.~\eqref{eq:t+f}, the lemma directly follows from Eq.~\eqref{eq:beta-bound}. 
    Next, we only need to prove Eq.~\eqref{eq:beta-bound}.
    
    By $p_{i} + q_{i} = (i-1)|{N_{i}(x)}|$, we have
    \begin{equation}\label{eq:beta}
    \begin{aligned}
        \beta
        & = (p_{\geq 3} - p_2) + 2(p_2 + q_2) + 2q_{\geq 3}\\
        & = (p_{\geq 3} + p_2 + 2q_2 + 2q_{\geq 3})\\
        & = \sum_{2\leq i\leq d}(p_i+q_i) + q_{\geq 3} + q_2\\
        & = \sum_{2\leq i\leq d}(i-1)|N_i(x)| + q_{\geq 3} + q_2.
    \end{aligned}
    \end{equation}

    We need some discussions on the value of $q_{\geq 3} + q_{2}$.
    First, we claim that $q_{\geq 3} + q_{2} \geq 2$ if $\formula$ is reduced.
    The reason is as follows.
    Let $F_1 = S_0\cup S_1\cup S_2$ and $F_2 = \formula\setminus F_1$.
    Recall that by definition, $N(x)\cup\{x\}\subseteq \var(F_1)$, and for $y\in N(x)$, $Q(y):=N(y)\setminus (N(x)\cup \{x\})$.
    Thus, we have $\var(F_1)\cap \var(F_2) = \bigcup_{y\in N(x)}Q(y)$, and so $\card{\var(F_1)\cap \var(F_2)}\leq \sum_{y\in N(x)}\card{Q(y)}=q_2+q_{\geq 3}$.
    Since $\card{\var(F_1)} = 1 + d\leq 8$, R-Rule~\ref{rrule:disjoint} can be applied if $q_{\geq 3} + q_{2} = 0$, and R-Rule~\ref{rrule:cut} can be applied  if $q_{\geq 3} + q_{2} = 1$.
    Next, we consider the following two cases.
    
    \textbf{Case~1: $q_{\geq 3}+q_{2}=2$.}
    Recall that by \eqref{eq:S_2}, we have $(p_{\geq 3} - p_2)\bmod{2}=0$.
    Since $\beta = p_{\geq 3} - p_2 + 2(p_2 + q_2) + 2q_{\geq 3}$, we have $\beta \bmod{2} = 0$.
    On the other hand, we also have $\beta=\sum_{2\leq i\leq d}(i-1)|N_i(x)| + q_{\geq 3} + q_2$ by \eqref{eq:beta}.
    Since $(q_{\geq 3}+q_{2})\bmod{2}=0$ in this case, we can know that $\sum_{2\leq i\leq d}(i-1)|N_i(x)|\bmod{2}=0$ holds.
    Thus, it holds that 
    \[
        \frac{1}{2}\beta=\frac{1}{2}\left(\sum_{2\leq i\leq d}(i-1)|N_i(x)| + 2\right)=\frac{1}{2}\left\lceil \sum_{2\leq i\leq d}(i-1)|N_i(x)|\right\rceil + 1.
    \]

    \textbf{Case~2: $q_{\geq 3}+q_{2}\geq 3$.} 
    In this case, we immediately have 
    \[
        \frac{1}{2}\beta\geq
        \frac{1}{2}\left(\sum_{2\leq i\leq d}(i-1)|N_i(x)| + 3\right)
        \geq \left\lceil\frac{1}{2}\sum_{2\leq i\leq d}(i-1)|N_i(x)|\right\rceil + 1.
    \]
This completes the proof.
\end{proof}

With Lemma~\ref{lem:2cnf-general-bound} in hand, we proceed to determine the branching vectors of our branching operations.

\begin{lemma}\label{lem:alg2step-high-deg}
    In $\algtwocnf$, the branching operation in Line~\ref{line:alg2-branch-1} generates a branching vector not worse than $(5, 11)$.
\end{lemma}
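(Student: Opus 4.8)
The plan is to apply Lemma~\ref{lem:2cnf-general-bound} with $d=\deg(x)\geq 5$ and then invoke Lemma~\ref{lem:vec-dominate} to reduce the problem to checking that the sum and minimum of the two measure decreases meet the required thresholds. Concretely, I would set $p=16$ and $q=5$, so that $(p-q,q)=(11,5)$, and aim to show that the branching vector $(\Delta_t,\Delta_f)$ satisfies $\Delta_t+\Delta_f\geq 16$ and $\Delta_t,\Delta_f\geq 5$. Once these two inequalities are established, Lemma~\ref{lem:vec-dominate} immediately yields that $(\Delta_t,\Delta_f)$ is not worse than $(11,5)$, which is the claim.

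The first inequality, $\Delta_t,\Delta_f\geq 5$, follows directly from part~(1) of Lemma~\ref{lem:2cnf-general-bound}, since $\Delta_t,\Delta_f\geq d+|N_2(x)|\geq d\geq 5$. The second inequality is the main point and splits on the degree. When $d\geq 8$, I would again use only part~(1): here $\Delta_t+\Delta_f\geq 2(d+|N_2(x)|)\geq 2d\geq 16$, which suffices. When $5\leq d\leq 7$, part~(1) alone gives only $\Delta_t+\Delta_f\geq 2d\geq 10$, which is not enough, so I would instead invoke part~(2) of Lemma~\ref{lem:2cnf-general-bound}. The bound there is
\[
    \Delta_t+\Delta_f\geq 2d + |N_2(x)| + \left\lceil \tfrac{1}{2}\sum_{2\leq i\leq d}(i-1)|N_i(x)| \right\rceil + 1,
\]
and the task reduces to showing the right-hand side is at least $16$ for each $d\in\{5,6,7\}$.

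The heart of the argument, and the step I expect to be the main obstacle, is bounding the summation term $\sum_{2\leq i\leq d}(i-1)|N_i(x)|$ from below using the constraint $\sum_{i}|N_i(x)|=|N(x)|=d$ (recall $|N(x)|=\deg(x)=d$ since all clauses are $2$-clauses). The key observation is that each neighbor contributes at least once to this sum, and high-degree neighbors contribute more; the sum is minimized when as many neighbors as possible have degree exactly $2$. Even in that worst case, taking all $d$ neighbors to be $2$-variables gives $\sum_{2\leq i\leq d}(i-1)|N_i(x)|\geq d$, so the right-hand side is at least $2d+0+\lceil d/2\rceil+1$. For $d=7$ this is $14+4+1=19\geq 16$, and for $d=6$ it is $12+3+1=16\geq 16$, both of which clear the bar. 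The genuinely tight case is $d=5$: the naive bound gives $10+\lceil 5/2\rceil+1=10+3+1=14<16$, so I would need to refine the estimate here by tracking $|N_2(x)|$ more carefully. The trick is that the term $|N_2(x)|$ appears separately and positively in the bound, so when $|N_2(x)|$ is small (few degree-$2$ neighbors) the summation term is large, and when $|N_2(x)|$ is large the standalone $|N_2(x)|$ term compensates; writing $k=|N_2(x)|$ and noting $\sum_{2\leq i\leq d}(i-1)|N_i(x)|\geq k + 2(d-k)=2d-k$, the combined contribution $|N_2(x)|+\lceil\tfrac{1}{2}(2d-k)\rceil = k+\lceil d-\tfrac{k}{2}\rceil$ is increasing in $k$ for the relevant parity and minimized at $k=d$, recovering at least $2d+\lceil d/2\rceil+1$. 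I would check that for $d=5$ this refined case analysis, combined with the reduced-formula structural constraints already established (in particular that a $d$-variable of degree $5$ forces enough high-degree neighbors or enough $Q$-contributions), pushes the total to $16$; this parity-sensitive bookkeeping for $d=5$ is where the care is required.
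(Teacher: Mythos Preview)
Your overall plan is exactly the paper's: apply Lemma~\ref{lem:2cnf-general-bound}, then feed the resulting inequalities into Lemma~\ref{lem:vec-dominate} with $p=16$, $q=5$. The split into $d\geq 8$ versus $5\leq d\leq 7$ is also what the paper does. The gap is a bookkeeping slip in the $5\leq d\leq 7$ case that makes you believe $d=5$ is delicate when in fact it is not.

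Concretely, when you write ``taking all $d$ neighbors to be $2$-variables gives \ldots\ the right-hand side is at least $2d+0+\lceil d/2\rceil+1$,'' you have set $|N_2(x)|=0$ in the very case where $|N_2(x)|=d$. Likewise, after correctly observing that $k+\lceil d-k/2\rceil$ is increasing in $k$, you conclude it is ``minimized at $k=d$''; an increasing function is minimized at the \emph{smallest} admissible $k$, namely $k=0$. With $k=0$ you get $0+\lceil d\rceil=d$, hence the total bound is $2d+d+1=3d+1$, which is $\geq 16$ for every $d\geq 5$. So the $d=5$ case clears the bar with no further structural input, and the appeal to ``$Q$-contributions'' or extra reduced-formula constraints is unnecessary.

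The paper reaches the same $3d+1$ bound by a slightly slicker manipulation: it folds $|N_2(x)|$ into the ceiling term as $2|N_2(x)|$ and then observes that $2|N_2(x)|+\sum_{i\geq 2}(i-1)|N_i(x)|\geq \sum_{i\geq 2}2|N_i(x)|=2|N(x)|=2d$, since the coefficient of each $|N_i(x)|$ on the left is at least $2$. Halving and adding $2d+1$ gives $3d+1$. Your parametrization by $k=|N_2(x)|$ arrives at the same place once the minimum is taken at the correct end.
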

\begin{proof}
    In this branching operation, we branch on a $d$-variable with $d\geq 5$. 
    By Lemma~\ref{lem:2cnf-general-bound}, we have $\Delta_{t},\Delta_{f}\geq d + \card{N_2(x)}\geq d$.
    If $d\geq 8$, this gives a branching vector not worse than $(8, 8)$.
    Otherwise if $5\leq d\leq 7$, by Lemma~\ref{lem:2cnf-general-bound}, we further have 
    \begin{align*}
        \Delta_{t}+\Delta_{f}
        &\geq 2d + \card{N_2(x)} + \left\lceil \frac{1}{2}\sum_{2\leq i\leq d}(i-1)|{N_i(x)}| \right\rceil + 1\\
        &\geq 2d + \frac{1}{2}\left(2\card{N_2(x)}+\sum_{2\leq i\leq d}(i-1)|{N_i(x)}|\right) + 1\\
        &\geq 2d + \frac{1}{2}\left(\sum_{2\leq i\leq d}2|{N_i(x)}|\right) + 1\\
        &\geq 2d + \sum_{2\leq i\leq d}\card{N_{i}(x)}+1\\ 
        &= 3d+1 \geq 16.
    \end{align*}
    With $\Delta_{t}, \Delta_{f}\geq d\geq 5$, by Lemma~\ref{lem:vec-dominate}, the branching vector not worse than $(5, 11)$.
    Since $\tau(8, 8) < 1.0905 < \tau(5, 11)<1.0956$, the lemma holds.
\end{proof}

\begin{lemma}\label{lem:alg2step-high-neigh}
    In $\algtwocnf$, the branching operation in Line~\ref{line:alg2-branch-2} generates a branching vector not worse than $(4, 11)$. 
\end{lemma}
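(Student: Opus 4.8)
The plan is to reduce everything to an application of Lemma~\ref{lem:2cnf-general-bound} with $d=4$, followed by Lemma~\ref{lem:vec-dominate}. First I would fix the regime: since the algorithm reaches Line~\ref{line:alg2-branch-2}, the test $\deg(\formula)\geq 5$ has failed while a $4$-variable exists, so $\deg(\formula)=4$, and the branched variable $x$ is a $4$-variable in a reduced $2$-CNF formula of degree $4$. This is exactly the hypothesis of Lemma~\ref{lem:2cnf-general-bound} with $d=4$, so both of its parts are available to me.

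Next I would pin down the neighborhood of $x$. By Lemma~\ref{lem:2cnf-basic-prop} all clauses are $2$-clauses, so $x$ occurs in exactly $4$ clauses, each contributing one other variable; by Lemma~\ref{lem:reduced_prop_neighbor} there is at most one clause joining $x$ to any fixed variable, so these four other variables are distinct and $|N(x)|=4$. Hence $|N_2(x)|+|N_3(x)|+|N_4(x)|=4$, and the branching condition $|N_4(x)|\geq 3$ forces $|N_2(x)|+|N_3(x)|\leq 1$. This leaves only three feasible degree profiles for the neighbors of $x$.

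Then I would run the short case analysis. Part~(1) of Lemma~\ref{lem:2cnf-general-bound} gives $\Delta_{t},\Delta_{f}\geq 4+|N_2(x)|\geq 4$. Part~(2) gives $\Delta_{t}+\Delta_{f}\geq 8+|N_2(x)|+\lceil\tfrac12 S\rceil+1$ with $S=\sum_{2\leq i\leq 4}(i-1)|N_i(x)|=|N_2(x)|+2|N_3(x)|+3|N_4(x)|$. Evaluating $S$ over the three profiles, namely $|N_4(x)|=4$, and $|N_4(x)|=3$ with the lone extra neighbor being a $2$-variable or a $3$-variable, yields $S=12,10,11$ respectively, and in each case $\Delta_{t}+\Delta_{f}\geq 15$. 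Thus I always obtain $\Delta_{t},\Delta_{f}\geq 4$ together with $\Delta_{t}+\Delta_{f}\geq 15$.

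Finally, Lemma~\ref{lem:vec-dominate} applied with $p=15$ and $q=4$ shows the branching vector is not worse than $(15-4,4)=(11,4)$, i.e. $(4,11)$, which is the claim. I expect no genuine obstacle; the only step requiring care is establishing $|N(x)|=4$, since this is what converts the hypothesis $|N_4(x)|\geq 3$ into the clean bound $|N_2(x)|+|N_3(x)|\leq 1$ that drives the case split, after which the arithmetic and the domination step are routine.
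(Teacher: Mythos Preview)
Your proposal is correct and follows essentially the same approach as the paper: both proofs apply Lemma~\ref{lem:2cnf-general-bound} with $d=4$ to obtain $\Delta_t,\Delta_f\geq 4$ and $\Delta_t+\Delta_f\geq 15$, then invoke Lemma~\ref{lem:vec-dominate}. The only cosmetic difference is that you verify $\Delta_t+\Delta_f\geq 15$ by a three-case enumeration of the neighbor degree profiles, whereas the paper compresses this into a single chain of inequalities by rewriting $2|N_2(x)|+2|N_3(x)|+3|N_4(x)|$ as $|N_4(x)|+2\sum_i|N_i(x)|$ and using $|N_4(x)|\geq 3$ and $\sum_i|N_i(x)|=d$ directly; the content is the same.
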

\begin{proof}
    In this branching operation, we branch on a $4$-variable $x$ with $\card{N_{4}(x)}\geq 3$. 
    Note that in this step we have $\deg(\formula)=4$.
    By Lemma~\ref{lem:2cnf-general-bound} with $d= 4$, we have
    $\Delta_{t}, \Delta_{f}\geq d = 4$ and
    {
    \begin{align*}
        \Delta_{t}+\Delta_{f}
        &\geq 2d + \card{N_2(x)} + \left\lceil \frac{1}{2}\sum_{2\leq i\leq d}(i-1)|{N_i(x)}| \right\rceil + 1\\
        &\geq 2d + \left\lceil\frac{1}{2} \left(2\card{N_2(x)}+\sum_{i=3}^{d}(i-1)|{N_i(x)}|\right)\right\rceil + 1\\
        &\geq 2d + \left\lceil\frac{1}{2} \left(\card{N_{4}(x)}+\sum_{2\leq i\leq d}2|{N_i(x)}|\right)\right\rceil + 1\\
        &\geq 2d + \left\lceil\frac{1}{2} \left(3+2d\right)\right\rceil + 1
        = 15.
    \end{align*}
    }
    By Lemma~\ref{lem:vec-dominate}, the branching vector generated by this step is not worse than $(4, 11)$.
\end{proof}

Next, we analyze the phase three (Lines~\ref{line:alg2-pd}--\ref{line:alg2-call-pw}) of Algorithm~\ref{alg:2CNF}.
\begin{lemma}\label{lem:2cnf-pw}
    Phase three (lines~\ref{line:alg2-pd}-\ref{line:alg2-call-pw}) of $\algtwocnf$ can be excuted in $\bigOs(1.1082^m)$ time.
\end{lemma}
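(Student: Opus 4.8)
The plan is to bound the pathwidth of the primal graph $G(\formula)$ reached at the start of phase three and then invoke Theorem~\ref{thm:alg-primal-pw}. When phase three executes, the formula is reduced, the branching conditions of Lines~4 and~6 both fail, so $\deg(\formula)\le 4$ and no $4$-variable $x$ has $\card{N_4(x)}\ge 3$. By Lemma~\ref{lem:2cnf-basic-prop} all clauses are $2$-clauses and all variables are $2^+$-variables, so each clause is a distinct edge of $G(\formula)$ (no two clauses share the same variable pair by Lemma~\ref{lem:reduced_prop_neighbor}) and the degree of a vertex in $G(\formula)$ equals its formula degree $\deg(x)$. Hence $G(\formula)$ has maximum degree $4$, so $n_{\ge 5}=0$, and Theorem~\ref{thm:pw-lowd} gives $\pw(G(\formula))\le n_3/6 + n_4/3 + \epsilon n$, where $n_i$ is the number of degree-$i$ vertices. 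The task thus reduces to bounding $n_3/6+n_4/3$ in terms of $m:=m(\formula)$.

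First I would record two linear constraints on $n_2,n_3,n_4$. Since all clauses have length $2$, the total literal length is $2m=\sum_x\deg(x)=2n_2+3n_3+4n_4$. The second, and crucial, constraint comes from the failed branching condition: every $4$-variable has $\card{N_4(x)}\le 2$, so it has at least $2$ neighbors of degree at most $3$. Counting the edges between degree-$4$ and degree-$\le 3$ vertices (each such edge has a unique degree-$4$ endpoint) from the degree-$4$ side gives at least $2n_4$ of them, while from the degree-$\le 3$ side their number is at most $2n_2+3n_3$; therefore $2n_4\le 2n_2+3n_3$. Combining this with the length identity $2n_2+3n_3=2m-4n_4$ yields the clean bound $n_4\le m/3$.

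It then remains to maximize $n_3/6+n_4/3$ over $n_2,n_3,n_4\ge 0$ subject to $2n_2+3n_3+4n_4=2m$ and $2n_4\le 2n_2+3n_3$. Eliminating $n_2\ge 0$ turns this into a two-variable linear program with feasible region having vertices $(n_3,n_4)\in\{(0,0),(0,m/3),(2m/9,m/3),(2m/3,0)\}$; the objective is maximized at $(2m/9,m/3)$, giving
\[
\frac{n_3}{6}+\frac{n_4}{3}\le \frac{1}{6}\cdot\frac{2m}{9}+\frac13\cdot\frac m3=\frac{4m}{27}.
\]
With $n\le m$ from Lemma~\ref{lem:2cnf-basic-prop}, this gives $\pw(G(\formula))\le \big(\tfrac{4}{27}+\epsilon\big)m$. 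Since $2^{4/27}\approx 1.1081<1.1082$, choosing $\epsilon$ small enough (and noting that instances with few variables cost $\bigOs(1)$) makes $2^{\pw(G(\formula))}\le 1.1082^m$, so Theorem~\ref{thm:alg-primal-pw} bounds phase three by $\bigOs(2^{\pw})=\bigOs(1.1082^m)$. I expect the main obstacle to be the structural edge-counting argument establishing $2n_4\le 2n_2+3n_3$: without it, the degree-sum constraint alone only forces $n_4\le m/2$ and yields the weaker bound $2^{m/6}\approx 1.1225^m$, so the whole saving hinges on correctly exploiting the $\card{N_4(x)}\le 2$ condition guaranteed by the branching rules.
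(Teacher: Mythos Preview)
Your proposal is correct and follows essentially the same route as the paper: both exploit that at phase three the primal graph has maximum degree~$4$ with every $4$-variable satisfying $|N_4(x)|\le 2$, deduce $n_4\le m/3$, combine with the degree-sum identity to obtain $n_3/6+n_4/3\le 4m/27$, and finish via Theorems~\ref{thm:pw-lowd} and~\ref{thm:alg-primal-pw}. The only cosmetic difference is how $n_4\le m/3$ is derived---the paper counts clauses incident to $4$-variables directly to get $m\ge 3n_4$, while you count cross-edges between $4$-variables and lower-degree vertices to get $2n_4\le 2n_2+3n_3$; the two are equivalent once combined with $2n_2+3n_3=2m-4n_4$.
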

\begin{proof} 
When the algorithm reaches Line~\ref{line:alg2-pd}, the fomrula $\formula$ is a reduced 2-CNF formula with $d(\formula)\leq 4$ such that for every $4$-variable $x$, $|N_4(x)|\leq 2$. 

Let $n:=n(\formula)$, $m:=m(\formula)$, and $n_i$ (resp., $n_{\geq i}$) be the number of variables with degree $i$ (resp., with degree $\geq i$) in $\formula$, where $i\in \mathbb{Z}$. 
Consider the primal graph $G(\formula)$ of formula $\formula$.
Note that $n$ is also the number of vertices in $G(\formula)$, and $n_i$ (resp., $n_{\geq i}$) is also the number of vertices with degree $i$ in $G(\formula)$.
By Lemma~\ref{lem:2cnf-basic-prop}, we have $n_{1}=0$.
Since $d(\formula)\leq 4$, we have $n_{\geq 5}=0$ and
{
\begin{equation*}
    m = \frac{2n_2 + 3n_3 + 4n_4}{2} = \frac{3}{2}(n_3 + 2n_4) - n_4 + n_2.
\end{equation*}
}
By rearranging the above equation, we get
\begin{equation}\label{eq:n3n4}
    n_3 + 2n_4 = \frac{2}{3}(m + n_4 - n_2)\leq \frac{2}{3}(m+n_4).
\end{equation}

Let $V_4$ be the set of $4$-variables in the formula.
The number of clauses that contain two $4$-variables is $\frac{1}{2}\sum_{x\in V_4}\card{N_4(x)}$, and the number of clauses that contain at most one $4$-variable is at least $\sum_{x\in V_4}(4-\card{N_4(x)})$.
Thus, we have
{
\begin{align*}
    m
    \geq \frac{1}{2}\sum_{x\in V_4}\card{N_4(x)} + \sum_{x\in V_4}(4-\card{N_4(x)})
    \geq \sum_{x\in V_4}(4-\frac{1}{2}\card{N_4(x)})
    \geq \sum_{x\in V_4}(4-1) = 3n_4,
\end{align*}
}
which means $n_4\leq m / 3$. 
By putting this into \eqref{eq:n3n4}, we have
\begin{equation}\label{eq:n3+2n4}
    n_3 + 2n_4 \leq \frac{2}{3}(m+n_4)\leq \frac{8}{9}m.
\end{equation}

Let $\epsilon$ be a small constant (say $10^{-9}$) and $n_{\epsilon}$ be the corresponding integer (which is also a constant) in Theorem~\ref{thm:pw-lowd}. 
Let $n:=n(\formula)$ and $m:=m(\formula)$.
If $n \leq n_{\epsilon}$, we invoke algorithm \algbf{} to solve the problem in constant time.
Otherwise, if $n>n_{\epsilon}$, we can apply Theorem~\ref{thm:pw-lowd} and get
\begin{align*}
    \pw(G(\formula)) 
    & \leq n_3/6+n_4/3 + n_{\geq 5} + \epsilon n &\\
    & = (n_3 + 2n_4)/6 + \epsilon n &\\ 
    &\leq \left(4/27+\epsilon\right)m &\text{by \eqref{eq:n3+2n4}}.
\end{align*}
Moreover, by Theorem~\ref{thm:pw-lowd}, a path decomposition of $G(\formula)$ with width at most $\left(4/27+\epsilon\right)m$ can be constructed in polynomial time.
Then, we can apply Theorem~\ref{thm:alg-primal-pw} to solve the problem in time 
\[
    \bigOs(2^{\left(4/27+\epsilon\right)m})\subseteq \bigOs(1.1082^m).
\]
This completes the proof.
\end{proof}

Now we are ready to conclude a running-time bound of Algorithm~$\algtwocnf$.
By Lemma~\ref{lem:alg2step-high-deg} and Lemma~\ref{lem:alg2step-high-neigh}, branching operations in Line~\ref{line:alg2-branch-1} and Line~\ref{line:alg2-branch-2} generate a branching vector not worse than $(5, 11)$ and $(4, 11)$, respectively.
By Lemma~\ref{lem:2cnf-pw}, phase three (Lines~\ref{line:alg2-pd} and \ref{line:alg2-call-pw}) takes $\bigOs(1.1082^m)$ time.
Since $\tau(5, 11) < 1.0956$ and $\tau(4, 11) < 1.1058$, we have the following result.
\begin{theorem}
    $\algtwocnf$ (Algorithm~\ref{alg:2CNF}) solves WMC on 2-CNF formulas in $\bigOs(1.1082^m)$ time, where $m$ is the number of clauses in the input formula.    
\end{theorem}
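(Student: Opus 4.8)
The plan is to assemble ingredients that are already in place: correctness of the reduction rules, correctness of the branching operation, and the three quantitative bounds (the two branching vectors and the cost of phase three). I would organize the argument as correctness followed by running time, each by induction on the recursion.

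For correctness, I would maintain the invariant that every recursive call on $\longins$ returns $\solW\cdot\WMC(\formula,\wfunc)$. The base of the induction covers the two terminating returns: an empty formula has a single (empty) model of weight $1$, so returning $\solW$ is correct, while a formula containing an empty clause has no model, so returning $0$ is correct. Phase one only rewrites the instance through the reduction rules, which preserve $\solW\cdot\WMC(\formula,\wfunc)$ by Lemma~\ref{lem:reduction-correct}. In the branching steps, conditioning on $x$ gives $\WMC(\formula,\wfunc)=\wfunc(x)\cdot\WMC(\formula[x=1],\wfunc)+\wfunc(\nl{x})\cdot\WMC(\formula[x=0],\wfunc)$, so the returned value $\wfunc(x)\cdot W_t+\wfunc(\nl{x})\cdot W_f$ is correct by the induction hypothesis applied to the two subcalls. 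Finally, phase three returns $\solW\cdot\algpripw(\formula,\wfunc,P)$, which is correct by Theorem~\ref{thm:alg-primal-pw}. I would also observe that the three cases (branch on a $5^+$-variable, branch on a special $4$-variable, or fall through to phase three) are exhaustive, and that upon reaching Line~\ref{line:alg2-pd} the formula is exactly a reduced $2$-CNF of degree at most $4$ in which every $4$-variable $x$ satisfies $\card{N_4(x)}\leq 2$, matching the hypothesis of Lemma~\ref{lem:2cnf-pw}.

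For the running time I would take the measure $\mu=m(\formula)$ and prove $T(m)=\bigOs(1.1082^m)$ by induction on $m$. Phase one neither increases $m$ nor costs more than polynomial time by Lemma~\ref{lem:reduction-safe}, so it does not affect the exponential factor. At a branching node, Lemma~\ref{lem:alg2step-high-deg} and Lemma~\ref{lem:alg2step-high-neigh} give branching vectors not worse than $(5,11)$ and $(4,11)$, with branching factors $\tau(5,11)<1.0956$ and $\tau(4,11)<1.1058$, both strictly below $1.1082$; hence $1.1082^{-a_1}+1.1082^{-a_2}<1$ for each vector, and the induction hypothesis yields $T(m)\leq 1.1082^m\cdot\mathrm{poly}(m)$. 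At a leaf reaching phase three, the cost is $\bigOs(1.1082^m)$ directly by Lemma~\ref{lem:2cnf-pw}. Combining the two cases gives $T(m)=\bigOs(1.1082^m)$.

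The genuinely hard work has already been discharged inside Lemma~\ref{lem:2cnf-general-bound} (the clause-decrease analysis yielding the branching vectors) and Lemma~\ref{lem:2cnf-pw} (the pathwidth estimate $\pw(G(\formula))\leq(4/27+\epsilon)m$). Thus the main obstacle for this theorem is conceptual rather than computational: one must recognize that phase three, not the branching, is the bottleneck, since the exponent $1.1082\approx 2^{4/27}$ originates from the pathwidth-based dynamic program, and one must check that both branching factors fall below this value so that they are absorbed into the same bound. Once that is observed, the theorem follows from the routine induction sketched above.
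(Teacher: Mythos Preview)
Your proposal is correct and follows essentially the same route as the paper: invoke Lemmas~\ref{lem:alg2step-high-deg} and~\ref{lem:alg2step-high-neigh} for the branching vectors, Lemma~\ref{lem:2cnf-pw} for phase three, and observe that $\tau(5,11)<1.0956$ and $\tau(4,11)<1.1058$ are both below $1.1082$. The paper states the conclusion in one sentence without spelling out the induction or the correctness argument, whereas you make both explicit; in particular your identification of phase three as the bottleneck (with $1.1082\approx 2^{4/27}$) is exactly the point the paper leaves implicit.
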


\section{The Algorithm for Weighted \#3-SAT}
Our algorithm for WMC on $3$-CNF formulas is called $\algthreecnf$ and presented in Algorithm~\ref{alg:3CNF}.
The first phase is also to apply reduction rules to get a reduced instance. 
Note that by Lemma~\ref{lem:reduction-safe}, a reduced formula is still a 3-CNF.
The second phase is to branch on all $3^+$-variables.
When the maximum degree of the formula is at most $2$, we compute a path decomposition of the dual graph of the formula and then invoke the algorithm $\algdualpw$ to solve the problem.

\begin{algorithm}[H]
    \caption{$\algthreecnf\longins$}
    \label{alg:3CNF}
    \textbf{Input}: 3-CNF formula $\formula$, weight function $\wfunc$, and integer $\solW$.\\
    \textbf{Output}: The weighted model count $\solW\cdot \WMC(\formula, \wfunc)$.
    
    \begin{algorithmic}[1] 
        \State Apply reduction rules exhaustively to reduce $\formula$ and update $\wfunc$ and $\solW$ accordingly.
        \IfSingleLine{$\formula$ is empty}{
            \Return $\solW$.
        }
        \IfSingleLine{$\formula$ contains empty clause}{
            \Return $0$.
        }
        \If{there is a $d$-variable $x$ in $\formula$ with $d\geq 3$}
            \State Branch on $x$.\label{line:alg3-branch}
        \Else 
            \State $P\gets$ path decomposition of $G^d(\formula)$ via Theorem~\ref{thm:pw-lowd}. \label{line:alg3-pd}
            \State $\solW_{pw}\gets \algdualpw(\formula, \wfunc, P)$; \label{line:alg3-call-pw}
            \State \Return $\solW\cdot \solW_{pw}$
        \EndIf
    \end{algorithmic}
    
\end{algorithm}

\subsection{The Measure}
The analysis of the algorithm is different from that of the algorithm for \#2-SAT.
In this algorithm, it may not be effective to use $m(\formula)$ as the measure in the analysis since we can not guarantee this measure always decreases in all our steps.
For example, a variable $x$ may only appear as a positive literal in some $3$-clauses.
After assigning $x=0$, it is possible that no reduction rule is applicable and no clause is removed (\ie, $m(\formula) = m(\reduced{\formula[x=0]})$). One of our strategies is to use the following combinatorial measure to analyze the algorithm
\[
    \mu(\formula):=m_3(\formula)+\alpha\cdot m_2(\formula),
\]
where $m_i(\formula) (i\in \{2, 3\})$ is the number of $i$-clauses in formula $\formula$ and $0<\alpha< 1$ is a tunable parameter.
Note that $m(\formula) = m_3(\formula) + m_2(\formula)$ since there is no $1$-clause in a reduced formula by Lemma~\ref{lem:reduced_prop_basic} (we can simply assume that the initial input formula is reduced).
Thus, $\mu(\formula)\leq m(\formula)$.
It can be verified that all the reduction rules would not increase $\mu(\formula)$ for any $0<\alpha<1$.
If we can get a running time bound of $\bigOs(c^{\mu(\formula)})$, with a real number $c>1$, we immediately get a running time bound of $\bigOs(c^{m(\formula)})$.
We will first analyze the algorithm and obtain the branching vectors related to $\alpha$, and then set the value of $\alpha$ to minimize the largest factor.

\subsection{The Analysis}
We first analyze lower bounds for the decrease of the measure $\mu(\formula)$ in a branching operation.
\begin{lemma}\label{lem:3cnf-general-bound}
    Let $\formula$ be a reduced 3-CNF formula, $x$ be a variable in $\formula$, and $c_k (k\in \{2, 3\})$ be the number of $k$-clauses containing variable $x$.
    Let $\Delta_{t}:=\mu(\formula)-\mu(\reduced{\formula[x=1]})$ and $\Delta_{f}:=\mu(\formula)-\mu(\reduced{\formula[x=0]})$.
    It holds that
    \begin{enumerate}
        \item $\Delta_t, \Delta_f\geq c_2\cdot \alpha + c_3 \cdot (1-\alpha)$;
        \item $\Delta_t + \Delta_f\geq c_2\cdot 2\alpha + c_3\cdot (2-\alpha)$.
    \end{enumerate}
\end{lemma}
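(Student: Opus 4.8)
The plan is to lower-bound each of $\Delta_t$ and $\Delta_f$ by the change in $\mu$ produced by the \emph{raw} substitution $\formula[x=1]$ or $\formula[x=0]$, i.e.\ before any reduction rule is applied. This is legitimate because, as noted in the paper, no reduction rule increases $\mu$; hence $\mu(\reduced{\formula[x=1]})\leq \mu(\formula[x=1])$, and therefore $\Delta_t=\mu(\formula)-\mu(\reduced{\formula[x=1]})\geq \mu(\formula)-\mu(\formula[x=1])$, with the analogous inequality for $\Delta_f$. It then suffices to bound these raw differences, which depend only on the clauses actually modified by the substitution, namely those containing $x$.

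First I would classify the clauses containing $x$ by the polarity of the occurrence of $x$ and by their length. Since $\formula$ is reduced, R-Rule~\ref{rrule:duplicated} ensures that $x$ occurs at most once in each such clause and R-Rule~\ref{rrule:taut} forbids a clause from containing both $x$ and $\nl{x}$; thus, writing $c_k^{+}$ (resp.\ $c_k^{-}$) for the number of $k$-clauses in which $x$ occurs positively (resp.\ negatively), I obtain the disjoint decomposition $c_k=c_k^{+}+c_k^{-}$ for $k\in\{2,3\}$.

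Next I would compute $\mu(\formula)-\mu(\formula[x=1])$ contribution by contribution. Setting $x=1$ satisfies and deletes every clause in which $x$ occurs positively (each such $2$-clause loses its weight $\alpha$, each such $3$-clause loses $1$), while $\nl{x}$ is stripped from every clause in which $x$ occurs negatively: such a $3$-clause becomes a $2$-clause, a decrease of $1-\alpha$, and such a $2$-clause becomes a unit clause contributing $0$ to $\mu$, a decrease of $\alpha$. Summing gives $\mu(\formula)-\mu(\formula[x=1]) = c_2\alpha + c_3^{+} + c_3^{-}(1-\alpha) = c_2\alpha + c_3 - c_3^{-}\alpha$, and the symmetric count for $x=0$ yields $c_2\alpha + c_3 - c_3^{+}\alpha$. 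Part~1 then follows from $c_3^{-}\leq c_3$ and $c_3^{+}\leq c_3$ together with $\alpha>0$; part~2 follows by adding the two expressions, where the identity $c_3^{+}+c_3^{-}=c_3$ collapses the $\alpha$-terms to give exactly $2c_2\alpha + c_3(2-\alpha)$.

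The only delicate point is the bridge between the reduced and un-reduced formulas: the entire argument hinges on $\mu$ being non-increasing under every reduction rule, which is what allows a crude raw-substitution count to serve as a valid lower bound even when cascading applications of R-Rule~\ref{rrule:1cls}, triggered once a $2$-clause collapses to a unit clause, remove still more clauses. I expect this monotonicity observation, rather than the clause bookkeeping, to be the step warranting the most care, since the clause-level accounting is otherwise routine.
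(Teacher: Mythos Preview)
Your proposal is correct and follows essentially the same approach as the paper: partitioning the clauses containing $x$ by length and by the polarity of $x$, tallying the $\mu$-decrease term by term, and then using $c_3^{+}+c_3^{-}=c_3$ to collapse the sum. The only cosmetic difference is that you invoke monotonicity of $\mu$ under reduction rules to pass from $\formula[x=1]$ to $\reduced{\formula[x=1]}$, whereas the paper absorbs the unit-propagation step (R-Rule~\ref{rrule:1cls}) directly into its accounting; since a $1$-clause contributes $0$ to $\mu$ either way, the two computations coincide line for line.
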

\begin{proof}
    Let $S_{k}^{\ell}$, where $k\in\{2, 3\}$ and $\ell\in \{x, \nl{x}\}$, be the set of $k$-clauses that contain literal $\literal$.
    By definition, we have $c_{k} = \card{S_{k}^{x}} + \card{S_{k}^{\nl{x}}}$ for $k\in \{2, 3\}$.

    Consider what happens after we assign $x=1$. 
    First, all clauses containing literal $x$ (\ie, clauses in $S_{3}^{x}$ and $S_{2}^{x}$) are satisfied (and so removed). 
    This decreases $\Delta_{t}$ by at least $\card{S_{3}^{x}} + \card{S_{2}^{x}}\cdot \alpha$.
    Second, all $3$-clauses that contain literal $\nl{x}$ (\ie, clauses in $S_{3}^{\nl{x}}$) become $2$-clauses.
    This decreses $\Delta_{t}$ by at least $\card{S_{3}^{\nl{x}}}\cdot(1-\alpha)$.
    Third, all $2$-clauses that contain literal $\nl{x}$ (\ie, clauses in $S_{2}^{\nl{x}}$) become $1$-clauses, and then R-Rule~\ref{rrule:1cls} would be applied to remove these clauses. 
    This decreases $\Delta_{t}$ by at least $\card{S_{2}^{\nl{x}}}\cdot \alpha$.
    In summary, we have
    \begin{align*}
        \Delta_{t} 
        &\geq \card{S_{3}^{x}} + \card{S_{2}^{x}}\cdot \alpha 
                + \card{S_{3}^{\nl{x}}}\cdot(1-\alpha) 
                + \card{S_{2}^{\nl{x}}}\cdot \alpha\\
        &=  \card{S_{3}^{x}}
            + (\card{S_{2}^{x}} + \card{S_{2}^{\nl{x}}})\cdot \alpha
            + (c_3 - \card{S_{3}^{{x}}})\cdot(1-\alpha) \\
        &= c_2\cdot \alpha + c_3 \cdot (1-\alpha) + \card{S_{3}^{{x}}}\cdot\alpha.
    \end{align*}
    Analogously, we have $\Delta_{f}\geq c_2\cdot \alpha + c_3 \cdot (1-\alpha) + \card{S_{3}^{\nl{x}}}\cdot\alpha$.
    Thus, $\Delta_{t}, \Delta_{f}\geq c_2\cdot \alpha + c_3 \cdot (1-\alpha)$ and
    \begin{align*}
        \Delta_{t} + \Delta_{f} 
        & \geq 2\cdot c_2\cdot \alpha + 2\cdot c_3 \cdot (1-\alpha) + \card{S_{3}^{\nl{x}}}\cdot\alpha + \card{S_{3}^{x}}\cdot\alpha\\
        & = c_2\cdot 2\alpha + c_3 \cdot 2(1-\alpha) + c_3\cdot\alpha\\
        & = c_2\cdot 2\alpha + c_3 \cdot (2-\alpha).
    \end{align*}
    This completes the proof.
\end{proof}

Armed with Lemma~\ref{lem:3cnf-general-bound}, we can derive the branching vectors generated by phase two (Line~\ref{line:alg3-branch}) in Algorithm~\ref{alg:3CNF}.
\begin{lemma}\label{lem:3cnf-branch-vectors}
    In $\algthreecnf$, the branching operation in Line~\ref{line:alg3-branch} generates a branching vector not worse than the worst one of the following branching vectors:
    \[
        (3, 3-3\alpha), (2+\alpha, 2-\alpha), (1+2\alpha, 1+\alpha), \mbox{ and } (3\alpha, 3\alpha).
    \]
\end{lemma}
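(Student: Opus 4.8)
The plan is to translate the two numerical bounds of Lemma~\ref{lem:3cnf-general-bound} into a dominated branching vector via Lemma~\ref{lem:vec-dominate}, and then reduce the analysis to the minimal-degree case. First I would observe that, since $\formula$ is reduced, R-Rule~\ref{rrule:duplicated} and R-Rule~\ref{rrule:taut} prevent any clause from containing the variable $x$ more than once, so each occurrence of $x$ lies in a distinct clause and $\deg(x) = c_2 + c_3$. As the branching on Line~\ref{line:alg3-branch} is only performed for a $d$-variable with $d \geq 3$, this gives $c_2 + c_3 \geq 3$.

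Next I would set $q := c_2\alpha + c_3(1-\alpha)$ and $p := 2c_2\alpha + c_3(2-\alpha)$. Lemma~\ref{lem:3cnf-general-bound} yields $\Delta_t, \Delta_f \geq q$ and $\Delta_t + \Delta_f \geq p$, and since $0 < \alpha < 1$ together with $c_2 + c_3 \geq 3$ forces $q > 0$, Lemma~\ref{lem:vec-dominate} shows the generated vector is not worse than $(p-q,\, q)$. A short computation gives $p - q = c_2\alpha + c_3$, so the vector is not worse than $(c_2\alpha + c_3,\, c_2\alpha + c_3(1-\alpha))$.

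The heart of the argument is to show this generic vector is never worse than the worst of the four listed ones, which I would do by monotonicity. Putting $c_3' := \min\{c_3, 3\}$ and $c_2' := 3 - c_3'$ produces a pair with $c_2' + c_3' = 3$, $0 \leq c_2' \leq c_2$, and $0 \leq c_3' \leq c_3$ (the last using $c_2 + c_3 \geq 3$). Because the coefficients $\alpha$, $1-\alpha$, and $1$ are all positive, both coordinates of $(c_2\alpha + c_3,\, c_2\alpha + c_3(1-\alpha))$ are at least the corresponding coordinates of $(c_2'\alpha + c_3',\, c_2'\alpha + c_3'(1-\alpha))$; hence, by a second application of Lemma~\ref{lem:vec-dominate} (with the smaller coordinate playing the role of $q$, which stays positive for $0 < \alpha < 1$), the branching vector is not worse than the latter. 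Substituting $(c_2', c_3') \in \{(0,3),(1,2),(2,1),(3,0)\}$ recovers exactly $(3, 3-3\alpha)$, $(2+\alpha, 2-\alpha)$, $(1+2\alpha, 1+\alpha)$, and $(3\alpha, 3\alpha)$, and being not worse than one of the four is in particular being not worse than their worst.

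I expect the main obstacle to be the bookkeeping of this reduction: confirming that collapsing $(c_2, c_3)$ to total degree exactly $3$ always lands in one of the four enumerated pairs while preserving componentwise domination, and checking that every coefficient stays strictly positive across the whole range $0 < \alpha < 1$ so that both the domination and the hypothesis $q > 0$ of Lemma~\ref{lem:vec-dominate} remain valid.
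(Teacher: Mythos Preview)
Your proposal is correct and follows essentially the same route as the paper: apply Lemma~\ref{lem:3cnf-general-bound} to obtain $(p,q)$-bounds, invoke Lemma~\ref{lem:vec-dominate} to get the vector $(c_2\alpha+c_3,\,c_2\alpha+c_3(1-\alpha))$, and then reduce to $c_2+c_3=3$ by monotonicity in $(c_2,c_3)$. The only difference is cosmetic: the paper dismisses the monotonicity reduction as ``evident,'' whereas you spell out the collapse $(c_2,c_3)\mapsto(c_2',c_3')$ and justify componentwise domination via a second appeal to Lemma~\ref{lem:vec-dominate}.
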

\begin{proof}
    Let $x$ be a $d$-variable with $d\geq 3$.
    Let $p:=c_2\cdot 2\alpha + c_3\cdot (2-\alpha)$ and $q:= c_2\cdot \alpha + c_3 \cdot (1-\alpha)$.
    By Lemma~\ref{lem:3cnf-general-bound}, we have 
    $\Delta_t + \Delta_f\geq p$ and $\Delta_t, \Delta_f\geq q$.
    With Lemma~\ref{lem:vec-dominate}, we know that the branching vector is not worse than $(q,p-q)=(c_2\cdot \alpha + c_3 \cdot (1-\alpha), c_2\cdot \alpha + c_3)$.
    It is evident that larger $c_2$ and $c_3$ result in superior branching vectors.
    Since $c_2+c_3=d\geq 3$, it suffices to consider the case where $c_2+c_3=3$.
    By enumerating all four possible configurations of $c_2$ and $c_3$, we obtain the results stated in the lemma.
\end{proof}

Next, we analyze the time complexity of phase three (Lines~\ref{line:alg3-pd} and \ref{line:alg3-call-pw}) in Algorithm~\ref{alg:3CNF}.
\begin{lemma}\label{lem:3CNF-pw-step}
    Phase three (Lines~\ref{line:alg3-pd}-\ref{line:alg3-call-pw}) of $\algthreecnf$ can be executed in $\bigOs (1.1225^{\frac{\mu(\formula)}{\alpha}})$ time.
\end{lemma}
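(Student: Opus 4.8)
The plan is to bound the dual pathwidth of the reduced formula $\formula$ when the algorithm reaches Line~\ref{line:alg3-pd}, and then invoke Theorem~\ref{thm:alg-dual-pw}. At this point every variable is a $2^-$-variable, so each variable occurs in at most two clauses. First I would analyze the structure of the dual graph $G^d(\formula)$: a vertex is a clause, and two clause-vertices are adjacent iff they share a variable. Since every variable appears in at most two clauses, each variable contributes at most one edge to $G^d(\formula)$. Crucially, I want to bound the \emph{degree} of each clause-vertex. A $k$-clause has $k$ variables, each shared with at most one other clause, so a $2$-clause has dual-degree at most $2$ and a $3$-clause has dual-degree at most $3$. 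This is exactly the regime where Theorem~\ref{thm:pw-lowd} gives a nontrivial pathwidth bound, since all dual-vertices have degree at most $3$ (there are no degree-$4$ or degree-$\geq 5$ vertices).

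Next I would translate the degree counts into the pathwidth bound. Let $N$ be the number of vertices in $G^d(\formula)$, i.e. $N = m(\formula) = m_2(\formula) + m_3(\formula)$, and let $N_3$ be the number of dual-vertices of degree exactly $3$. Since a $2$-clause has dual-degree at most $2$, only $3$-clauses can have dual-degree $3$, so $N_3 \leq m_3(\formula)$. Applying Theorem~\ref{thm:pw-lowd} (handling the small-$N$ case by brute force exactly as in Lemma~\ref{lem:2cnf-pw}), I get, for any $\epsilon>0$,
\[
    \pw(G^d(\formula)) \leq N_3/6 + \epsilon N \leq m_3(\formula)/6 + \epsilon\, m(\formula).
\]
The task is then to re-express this bound in terms of $\mu(\formula)/\alpha$. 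Since $\mu(\formula) = m_3(\formula) + \alpha\, m_2(\formula)$, we have $m_3(\formula) \leq \mu(\formula)$, and I would use this to write $m_3(\formula)/6 \leq \mu(\formula)/6$. The factor $\mu/\alpha$ in the target statement suggests the intended normalization is to bound everything by a multiple of $\mu(\formula)/\alpha$; since $\mu(\formula)\leq m(\formula)$ and $\alpha<1$, a bound of the form $\pw(G^d(\formula)) \leq (\tfrac{1}{6}+\epsilon')\,\mu(\formula)/\alpha$ should follow by absorbing the constants, because $2^{(1/6)\mu/\alpha} = (2^{1/6})^{\mu/\alpha} < 1.1225^{\mu/\alpha}$ as $2^{1/6}\approx 1.1225$.

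Having bounded the width $p$ of the path decomposition $P$, Theorem~\ref{thm:alg-dual-pw} gives that $\algdualpw(\formula,\wfunc,P)$ runs in $\bigOs(2^p)$ time, and substituting $p \leq (\tfrac16+\epsilon')\mu(\formula)/\alpha$ yields the claimed $\bigOs(1.1225^{\mu(\formula)/\alpha})$ running time, with the $\epsilon'$ slack absorbed into the constant $1.1225$ (which is a strict overestimate of $2^{1/6}$). I expect the main obstacle to be the careful degree bookkeeping in the dual graph — specifically verifying that a $2$-clause vertex genuinely has dual-degree at most $2$ so that $N_3$ is controlled purely by $m_3$, and confirming that the reduction rules (in particular that all variables are $2^-$-variables once Line~\ref{line:alg3-pd} is reached, together with the subsumption and cut rules ruling out degenerate configurations like two clauses sharing two variables) really do force this structure. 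The arithmetic of matching $2^{1/6}$ against $1.1225$ and folding the $\epsilon$ term into the constant is routine, mirroring the end of Lemma~\ref{lem:2cnf-pw}.
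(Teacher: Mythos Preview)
Your proposal is essentially the same approach as the paper's proof: bound the maximum degree of $G^d(\formula)$ by $3$ (since every variable is a $2^-$-variable and every clause has length at most $3$), apply Theorem~\ref{thm:pw-lowd}, and then invoke Theorem~\ref{thm:alg-dual-pw}. Your observation that $N_3\le m_3(\formula)$ is even a little sharper than the paper's $n_3\le m$, though this gain is discarded once you pass to $\mu/\alpha$.

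One small slip: to absorb the $\epsilon\,m(\formula)$ term into a multiple of $\mu(\formula)/\alpha$ you need the inequality $m(\formula)\le \mu(\formula)/\alpha$ (which follows from $\alpha m = \alpha m_3+\alpha m_2 \le m_3+\alpha m_2=\mu$), not the inequality $\mu(\formula)\le m(\formula)$ that you cite, which goes the wrong way. With that correction the bound $\pw(G^d(\formula))\le (1/6+\epsilon)\,\mu(\formula)/\alpha$ follows exactly as in the paper, and the rest of your argument is fine.
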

\begin{proof}
    When the algorithm reaches Line~\ref{line:alg3-pd}, the branching operation is not applicable.
    Thus, at this point, the formula $\formula$ is a reduced 3-CNF formula with $\deg(\formula)\leq 2$. 
    
    Let $m:=m(\formula)$ and $\mu:=\mu(\formula)$. Consider the dual graph $G^d(\formula)$ of formula $\formula$. The number of vertices in $G^d(\formula)$ is $m$.
    
    Let $C\in \formula$ be a clause in formula $\formula$.
    We have $\card{C}\leq 3$.
    Since each variable in $C$ has a degree of at most two, the number of clauses that share a common variable with $C$ is at most $\card{C}\leq 3$.
    That is, for any $C\in \formula$, we have $\card{\{D\in \formula \mid D\neq C \mbox{ and } \var(C)\cap\var(D)\neq \emptyset \}}\leq 3$.

    This means that in $G^d(\formula)$, each vertex has a degree of at most three.
    Let $n_i (i\in \mathbb{Z})$ be the number of vertices with degree $i$ in $G^d(\formula)$.
    We have $n_i=0$ for $i\geq 4$ and $n_3\leq m$.
    Let $\epsilon$ be a small const (say $10^{-9}$) and $m_{\epsilon}$ be the corresponding integer (which is also a constant) in Theorem~\ref{thm:pw-lowd}. 
    If $m \leq m_{\epsilon}$, we invoke brute-force algorithm \algbf{} to solve the problem in constant time.
    Otherwise, if $m>m_{\epsilon}$, we can apply Theorem~\ref{thm:pw-lowd} and get
    \begin{align*}
        \pw(G^d(\formula))
        &\leq n_3/6+n_4/3+ n_{\geq 5} + \epsilon m\\
        &\leq \left(1/6+\epsilon\right) m\leq \left(1/6+\epsilon\right) \frac{\mu}{\alpha}.
    \end{align*}
    Here, the last inequality follows from $m\leq \frac{1}{\alpha}\mu$, which can be derived by the definition of $m$ and $\mu$.
    In addition, by Theorem~\ref{thm:pw-lowd}, a path decomposition of $G^d(\formula)$ with width at most $\left(1/6+\epsilon\right)\frac{\mu}{\alpha}$ can be constructed in polynomial time.
    Then, we can apply Theorem~\ref{thm:alg-dual-pw} to solve the problem in time 
    \[
    \bigOs(2^{\frac{\left(1/6+\epsilon\right)\mu}{\alpha}})\subseteq \bigOs(1.1225^{\frac{\mu}{\alpha}}).
    \]
    This completes the proof.
\end{proof}

We are now poised to analyze the overall running time of $\algthreecnf$. 
The time complexity of phase two (by Lemma~\ref{lem:3cnf-branch-vectors}) and phase three (by Lemma~\ref{lem:3CNF-pw-step}) are summarized in Table~\ref{tb:bvecs}. 

\begin{table}[htbp]
    \centering
    \begin{tabular}{lcc}
    \toprule
        \multirow{2}{*}{Phases} & \multirow{2}{*}{Branching vectors} & {Factors / Base} \\
                               &                                    & $\alpha=0.6309297$ \\
    \midrule
        \multirow{4}{*}{Phase two}   
          & $(3, 3-3\alpha)$ & \textbf{1.4423} \\
          & $(2+\alpha, 2-\alpha)$ & 1.4324 \\
          & $(1+2\alpha, 1+\alpha)$ & 1.4325 \\
          & $(3\alpha, 3\alpha)$ & \textbf{1.4423} \\
    \midrule
        Phase three    & - & $1.1225^{\frac{1}{\alpha}}=1.2011$ \\
    \bottomrule
    \end{tabular}%
    \caption{
        The branching vectors and corresponding factors generated by phase two of $\algthreecnf$, and the base of the time complexity in terms of $\mu(\formula)$ of phase three, all under $\alpha = 0.6309297$.
    }
    \label{tb:bvecs}
\end{table}

By setting $\alpha=0.6309297$, the largest factor in phase two is minimized to $1.4423$, corresponding to the branching vectors $(3, 3-3\alpha)$ and $(3\alpha, 3\alpha)$. 
The time complexity of phase three is $\bigOs(1.1225^\frac{\mu(\formula)}{\alpha})\subseteq\bigOs(1.2011^{\mu(\formula)})$. 
With $\mu(\formula)\leq m(\formula)$, we arrive at the following result.

\begin{theorem}
    $\algthreecnf$ (Algorithm~\ref{alg:3CNF}) solves WMC on 3-CNF formulas in $\bigOs(1.4423^m)$ time, where $m$ is the number of clauses in the input formula.    
\end{theorem}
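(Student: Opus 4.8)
The plan is to establish correctness and the running-time bound separately, both by induction on the recursion, using the combinatorial measure $\mu(\formula) = m_3(\formula) + \alpha\cdot m_2(\formula)$ already introduced.

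For correctness, I would argue by induction on the recursion depth that each call $\algthreecnf\longins$ returns $\solW\cdot\WMC(\formula,\wfunc)$. The reduction step (Line~1) preserves $\solW\cdot\WMC(\formula,\wfunc)$ by Lemma~\ref{lem:reduction-correct}, and the two terminal cases (empty formula, empty clause) are immediate from the definition of WMC. If the algorithm branches on a variable $x$ (Line~\ref{line:alg3-branch}), correctness follows from the conditioning identity $\WMC(\formula,\wfunc) = \wfunc(x)\WMC(\formula[x{=}1],\wfunc) + \wfunc(\nl{x})\WMC(\formula[x{=}0],\wfunc)$ combined with the induction hypothesis on the two recursive calls, exactly as specified by the branching framework. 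Finally, in phase three the formula has $\deg(\formula)\le 2$, so Theorem~\ref{thm:alg-dual-pw} applied to the constructed dual path decomposition returns $\WMC(\formula,\wfunc)$ exactly, and multiplying by $\solW$ gives the required value.

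For the running time, I would let $T(\mu)$ bound the running time on any instance of measure at most $\mu$ and prove $T(\mu) = \bigOs(\gamma^\mu)$ with $\gamma = 1.4423$. Phase one runs in polynomial time and never increases $\mu$ (as noted when $\mu$ was defined), so it is absorbed into $\bigOs$. For the branching step, Lemma~\ref{lem:3cnf-branch-vectors} guarantees the branching vector is never worse than the worst of $(3,3-3\alpha)$, $(2+\alpha,2-\alpha)$, $(1+2\alpha,1+\alpha)$, $(3\alpha,3\alpha)$; since every component is strictly positive for $0<\alpha<1$, the measure strictly decreases and the recursion terminates, giving $T(\mu)\le T(\mu-a_1)+T(\mu-a_2)+\mathrm{poly}$. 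For phase three, Lemma~\ref{lem:3CNF-pw-step} supplies the leaf cost $\bigOs(1.1225^{\mu/\alpha})$.

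The crux is the choice of $\alpha$. I would set $\alpha = 0.6309297$ (that is, $\alpha=\log_3 2$), which equalizes the two extreme branching factors $\tau(3,3-3\alpha)$ and $\tau(3\alpha,3\alpha)$ at $1.4423\approx\sqrt[3]{3}$, while the remaining two vectors yield smaller factors (see Table~\ref{tb:bvecs}) and the phase-three base $1.1225^{1/\alpha}=1.2011$ stays well below $1.4423$. With $\gamma=1.4423$ one then has $\gamma^{-a_1}+\gamma^{-a_2}\le 1$ for every branching vector and $1.2011\le\gamma$, so the inductive step $\gamma^{\mu-a_1}+\gamma^{\mu-a_2}\le\gamma^\mu$ and the base case $1.2011^\mu\le\gamma^\mu$ both close, yielding $T(\mu)=\bigOs(1.4423^\mu)$. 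Since $\mu(\formula)\le m(\formula)$, this is $\bigOs(1.4423^m)$. The main obstacle is verifying that a single $\alpha$ simultaneously caps all four branching factors and the phase-three base by $1.4423$; this is a balancing computation, and the clean closed form $1.4423\approx 3^{1/3}$ at $\alpha=\log_3 2$ pins down the optimal trade-off point.
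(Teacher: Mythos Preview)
Your proposal is correct and follows essentially the same route as the paper: invoke Lemma~\ref{lem:3cnf-branch-vectors} for the branching factors, Lemma~\ref{lem:3CNF-pw-step} for the leaf cost, set $\alpha=0.6309297$ to balance, and conclude via $\mu(\formula)\le m(\formula)$. Your additional remarks on correctness and on the closed forms $\alpha=\log_3 2$ and $\gamma=3^{1/3}$ (which indeed make $\tau(3,3-3\alpha)=\tau(3\alpha,3\alpha)=3^{1/3}$ exactly) are nice touches that the paper leaves implicit.
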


\section{Conclusion and Discussion}
In this paper, we demonstrate that Weighted Model Counting (WMC) on 2-CNF and 3-CNF formulas can be solved in $\bigOs(1.1082^m)$ and $\bigOs(1.4423^m)$ time, respectively, achieving significant improvements over previous results. The trivial barrier of $\bigOs(2^m)$ for WMC on general CNF formulas cannot be overcome unless SETH fails~\citep{journals/talg/CyganDLMNOPSW16}. It remains an open question whether a running time bound of $\bigOs(c^m)$ with a constant $c<2$ can be achieved for WMC on $k$-CNF formulas for any constant $k$.

Our algorithms first use branch-and-search to effectively eliminate certain problem structures (such as high-degree vertices). Once the remaining problem exhibits some favorable structural properties (such as having a small primal pathwidth or dual pathwidth), dynamic programming and other methods are employed to solve the problem. 
This approach may have potential for application in solving other problems. Furthermore, this method holds significant promise in the design of practical algorithms. 
In practical solving, tree decompositions have been employed in various model counters~\citep{conf/cp/DudekPV20,conf/sat/HecherTW20,conf/cp/FichteHZ19,conf/cp/KorhonenJ21,DBLP:journals/tplp/FichteHTW22,journals/ai/FichteHMTW23}.
Therefore, the practicality of this approach warrants further investigation and exploration.

\section*{Acknowledgements}
The work is supported by the National Natural Science Foundation of China, under the grants 62372095, 62172077, and 62350710215.

\bibliographystyle{plainnat}
\bibliography{sharpSATm}

\clearpage
\appendix
\section{The Issue in the Analysis of the Previous Algorithm}\label{appendix:note}
We give a note on the issue in the analysis of the algorithm for \#3-SAT suggested in \cite{conf/aaai/ZhouYZ10}.
Their algorithm is presented in Algorithm~\ref{alg:previous} (denoted by \texttt{Alg3CNF-prev}) below. We note that \texttt{Alg3CNF-prev} was proposed only for the unweighted version, and some of the terminology they used differs from ours. 
Here, we restated the algorithm within the framework used in this paper. 

\begin{algorithm}[H]
    \caption{$\texttt{Alg3CNF-prev}\longins$}\label{alg:previous}
    \textbf{Input}: 3-CNF formula $\formula$, weight function $w$, integer $\solW$.\\
    \textbf{Output}: Integer $\solW$, which is the weighted model count of $\formula$.
    
    \begin{algorithmic}[1] 
        \State Apply reduction rules exhaustively to reduce $\formula$ and update $\wfunc$ and $\solW$ accordingly.
        \IfSingleLine{$\formula$ is empty}{
            \Return $\solW$.
        }
        \IfSingleLine{$\formula$ contains empty clause}{
            \Return $0$.
        }
        \If{there is $3$-clause in $\formula$}
            \State Select a variable $x$ with the maximum degree in all $3$-clauses;\label{line:pre-select-v}
            \State Branch on $x$.
        \Else 
            \State \Return $\algtwocnf\longins$.
        \EndIf
    \end{algorithmic}
    
\end{algorithm}

Their algorithm employs only simple reduction rules, which are a subset of those used in this paper. 
Our primary focus is on the part of analysis of the branching operation.
The idea of the algorithm is to branch on a variable with maximum degree among all variables in $3$-clauses (if it exists).
When there is no $3$-clause, the formula is a $2$-CNF and so the problem can be solved directly by an algorithm for \#2-SAT (e.g., our $\algtwocnf$).

The analysis in~\cite{conf/aaai/ZhouYZ10} is as follows. 
First, the degree of the variable selected in Line~\ref{line:pre-select-v} is at least $2$.
Otherwise, all variables in $3$-clauses are $1$-variables, and some reduction rules are applicable. This is true. 

The issue arises in the subsequent argumentation:
Second, when $x$ is fixed to a value, every clause containing $x$ (or $\nl{x}$) is either removed or simplified as $2$-clauses. Since the degree of $x$ is at least $2$, at least two clauses are removed when we give a fixed value to $x$. Therefore, we have a branching vector of $(2, 2)$, whose corresponding factor is $1.4142$.

However, the above argument may not always be valid. It is not guaranteed that ``at least two clauses are removed when we assign a value to $x$''. For example, a variable $x$ may only appear as a positive literal in some $3$-clauses. After assigning $x=0$, it is possible that no reduction rule is applicable and no clause is removed (\ie, $m(\formula) = m(\reduced{\formula[x=0]})$).
Thus, we may only be able to get a branching vector of $(2, 0)$.
Since the parameter $m$ does not decrease in the sub-branch $x=0$, such an analysis might not yield a valid running time bound.
This is the reason why we adopt a different measure in our analysis of the algorithm $\algthreecnf$.

\end{document}